\documentclass[11pt]{article}
\addtolength{\textheight}{4cm}
\addtolength{\voffset}{-2cm}
\addtolength{\textwidth}{3cm}
\addtolength{\hoffset}{-1.5cm}

\usepackage{amssymb, amsmath, amsthm}
\usepackage{xspace}
\usepackage{calc}
\usepackage[sort&compress,numbers]{natbib}
\usepackage{enumitem}
\usepackage{algorithm}
\usepackage{times}
\usepackage[T1]{fontenc}

\usepackage{amssymb}        
\usepackage{latexsym}
\usepackage{algpseudocode}
\usepackage{graphicx}
\usepackage{verbatim}

\usepackage{tikz}
\usepackage{subcaption}
\usetikzlibrary{calc}
\usetikzlibrary{decorations.markings,decorations.pathmorphing,positioning,intersections}

\newtheorem{theorem}{Theorem}

\newtheorem{definition}{Definition}

\newtheorem{claim}{Claim}

\newcommand{\lbmatching}{$(l,u)$-matching}

\usepackage[textwidth=20mm,disable]{todonotes}
\newcommand{\mytodo}[2]{\todo[size=\tiny, color=#1!50!white]{#2}}

\newcommand{\mwcom}[1]{\mytodo{green}{#1}}

\usepackage{hyperref}

\author{Katarzyna Paluch \thanks{\texttt{abraka@cs.uni.wroc.pl}} \hspace{0.1cm}
   and Mateusz Wasylkiewicz \thanks{\texttt{mateusz.wasylkiewicz@cs.uni.wroc.pl}}\\
	Institute of Computer Science,  University of Wroc{\l}aw \\}

\title{A   simple combinatorial algorithm for restricted \mwcom{maybe add "weighted" to title?} 2-matchings  in subcubic graphs - via half-edges\thanks{Partially supported by Polish National Science Center grant 2018/29/B/ST6/02633.}}

\date{}

\begin{document}
\maketitle

\begin{abstract}
We consider three variants of the  problem of finding a maximum weight restricted $2$-matching  in a subcubic graph $G$. (A $2$-matching is any subset of the edges such that each vertex is incident to at most two of its edges.) Depending on the variant a {\em restricted} $2$-matching means a $2$-matching that is either triangle-free or square-free or both triangle- and square-free. While there exist polynomial time algorithms for the first two types of  $2$-matchings
, they are quite complicated or use advanced methodology.  For each of the three problems we present a simple reduction  to the computation of a maximum weight $b$-matching. The reduction is conducted with the aid of {\em half-edges}.  A {\em half-edge} of edge $e$ is, informally speaking, a half  of $e$
containing exactly one of its endpoints. For a subset of triangles of $G$, we replace each edge  of such a  triangle with two half-edges. Two half-edges of one edge $e$ of weight $w(e)$ may get different weights, not necessarily equal to $\frac{1}{2}w(e)$.   In the metric setting when the edge weights satisfy the triangle inequality, this  has a geometric interpretation connected to how an  incircle partitions the edges of a triangle. 
Our algorithms are additionally faster than those known before. The  running time of each of them is $O(n^2\log{n})$, where $n$ denotes the number of vertices in the graph.
\end{abstract}

\tikzset{vertex/.style={minimum size=#1,circle,fill=black,draw,inner sep=0pt},
	decoration={markings,mark=at position .5 with {\arrow[black,thick]{stealth}}},
	vertex/.default=2.5mm,
	bigVertex/.style={vertex=4mm},
	e0/.style={line width=0.8pt},
	e1/.style={line width=0.8mm},
	transformsTo/.pic=
	{
		\coordinate (-leftEnd) at (0,0);
		\coordinate (-rightEnd) at (5,0);
		\draw[thick] (0,0) -- (0,1) -- (3,1) -- (3,2) -- (5,0) -- (3,-2) -- (3,-1) -- (0,-1) -- (0,0);
	}}

\section{Introduction}
A subset~$M$ of edges of an undirected simple graph is a \emph{2-matching} if every vertex is incident to at most two  edges of~$M$. $2$-matchings belong to a wider class of $b$-matchings, where for every vertex $v$ in the set of vertices $V$ of the graph, we are given a natural number $b(v)$ and a subset of edges is a \emph{$b$-matching} if every vertex is incident to at most $b(v)$ of its  edges.  A $2$-matching is called \emph{$C_k$-free} if it does not contain any cycle of length at most~$k$. Note that every $2$-matching is $C_2$-free and the smallest length of a cycle in a $2$-matching is three.  A $2$-matching of maximum size can be found in polynomial time by a reduction to a classical matching. The \emph{$C_k$-free $2$-matching problem} consists in finding a $C_k$-free $2$-matching of maximum size. Observe that the $C_k$-free $2$-matching problem for $n/2 \leq k < n$, where $n$ is the number of vertices in the graph, is equivalent to finding a Hamiltonian cycle, and thus NP-hard.
Hartvigsen \cite{Hartvigsen1984} gave a complicated algorithm for the case of  $k=3$. Papadimitriou \cite{CornuejolsPulleyblank1980} showed that this problem is NP-hard when $k\geq 5$. The complexity of the $C_4$-free $2$-matching problem is unknown.

In the weighted version of the problem,  each edge $e$ is associated with a  nonnegative weight $w(e)$ and we are interested in finding a $C_k$-free $2$-matching of maximum  weight, where the weight of a $2$-matching $M$ is defined as the sum of weights of edges belonging to $M$.  
Vornberger~\cite{Vornberger1980} showed that the weighted $C_4$-free $2$-matching problem is NP-hard. 
We refer to cycles of length three and four as \emph{triangles} and \emph{squares}, respectively.

In the paper we consider the following three problems in subcubic graphs:
the weighted triangle-free $2$-matching problem (i.e. the weighted $C_3$-free $2$-matching problem), the weighted square-free $2$-matching problem, in which we want to find a maximum weight $2$-matching without any squares, but possibly containing triangles and the weighted $C_4$-free $2$-matching problem. A graph is called {\em cubic} if its every vertex has degree $3$ and is   called {\em subcubic} if its every vertex has degree at most $3$. 

{\em The weighted triangle-free $2$-matching problem in subcubic graphs.} The existing two polynomial time algorithms for this problem are the following. 
 Hartvigsen and Li~\cite{HartvigsenLi2012} gave 
a rather complicated primal-dual algorithm with running time $O(n^3)$ and a long analysis. The algorithm uses a type of so-called comb inequality.   Kobayashi~\cite{Kobayashi2010} devised a simpler algorithm using the theory of $M$-concave functions on finite constant-parity jump systems as well as makes $O(n^3)$ computations of a maximum weight $b$-matching for $b\in\{0,1,2\}^V$. Its running time is $O(n^5\log{n})$.

We present a simple combinatorial algorithm for the problem that uses one computation of a maximum weight $b$-matching for $b\in\{0,1,2\}^V$.
Given a subcubic graph $G$, we replace some  of its triangles with gadgets containing {\em half-edges} and define a function $b$ on the set of vertices in such a way that, any $b$-matching in the thus constructed graph $G'$ yields a triangle-free $2$-matching. 
A {\em half-edge} of edge $e$ is, informally speaking, a half  of $e$
containing exactly one of its endpoints. Half-edges have already been  introduced in \cite{PaluchEtAl2012} and used in several subsequent papers. Here we use a different weight distribution
among half-edges of one edge -
  two half-edges of one edge $e$  may be assigned different  weights and not necessarily equal to $\frac{1}{2}w(e)$.    In the metric setting when the edge weights satisfy the triangle inequality, this  has a geometric interpretation connected to how an  incircle partitions the edges of a triangle. The running time of our algorithm is $O(n^2\log{n})$. If the graph is unweighted, then the run time of this algorithm
	becomes $O(n^{3/2})$.

{\em Square-free $2$-matchings.} In bipartite graphs a shortest cycle has length four - a square.  Polynomial time algorithms for  the $C_4$-free $2$-matching problem in bipartite graphs were shown by Hartvigsen~\cite{Hartvigsen2006}, Pap~\cite{Pap2007} and analyzed by Kir\'aly \cite{Kiraly1999}. \mwcom{add note about vertex-induced} As for the weighted version of the square-free $2$-matching problem in bipartite graphs it was proven to be NP-hard \cite{Geelen1999, Kiraly2009} and solved by  Makai~\cite{Makai2007} and Takazawa~\cite{Takazawa2009} for the case  when  the weights of edges are vertex-induced on every square of the graph. When it comes to the square-free $2$-matching problem in general graphs, 
Nam~\cite{Nam1994} constructed a complex algorithm for it for graphs, in which all squares are vertex-disjoint.
 B\'erczi and Kobayashi \cite{BercziKobayashi2012} showed that the weighted square-free $2$-matching problem 
is NP-hard for general weights even if the given graph is cubic, bipartite and planar and
gave a polynomial algorithm that finds a  maximum weight $2$-matching that contains no squares (but it can contain triangles). In \cite{BercziKobayashi2012} the square-free $2$-matching problem is used for solving the $(n-3)$-connectivity augmentation problem.
As regards subcubic graphs, there are two other results besides those mentioned above.  B\'erczi and V\'egh~\cite{BercziVegh2010} considered the problem of finding a maximum $t$-matching (a $b$-matching such that $b(v)=t$ for each vertex $v$) which does not contain any subgraph from a given set of forbidden $K_{t,t}$ and $K_{t+1}$ in an undirected graph of degree at most $t+1$. Observe that the square-free $2$-matching problem in subcubic graphs is a special case of this problem for $t=2$. \mwcom{$C_3$- and $C_4$-free also}

The  $C_4$-free $2$-matching problem  was previously investigated only in the unweighted version by Hartvigsen and Li in \cite{HartvigsenLi2011}, who devised an $O(n^{3/2})$-algorithm. 
We present combinatorial algorithms for the weighted square-free $2$-matching problem and the weighted $C_4$-free $2$-matching problem 
for the case when the weights of edges are vertex-induced on every square of the graph and the graph is subcubic. These algorithms are similar to the one for the weighted triangle-free $2$-matching problem in subcubic graphs and have the same running time.

{\bf Related work} Some generalizations of the $C_k$-free $2$-matching problem were investigated. Recently, Kobayashi~\cite{Kobayashi2020} gave a polynomial algorithm for finding a maximum weight $2$-matching that does not contain any triangle from a given set of forbidden edge-disjoint triangles. One can also consider non-simple $b$-matchings, in which every edge $e$ may occur in more than one copy. Problems connected to
non-simple $b$-matchings are usually easier than variants with simple $b$-matchings.
Efficient algorithms for triangle-free non-simple $2$-matchings (such $2$-matchings may contain $2$-cycles) were devised by Cornu\'ejols and Pulleyback \cite{CornuejolsPulleyblank1980, CornuejolsPulleyblank1980a}, Babenko, Gusakov and Razenshteyn \cite{BabenkoEtAl2010}, and Artamonov and Babenko \cite{ArtamonovBabenko2018}.
Other results for restricted non-simple $b$-matchings appeared in \cite{Pap2009,Takazawa2017,Takazawa2017a}.


\section{Preliminaries}
Let $G=(V,E)$ be an undirected  graph with vertex set~$V$ and edge set~$E$. We denote the number of vertices of $G$ by $n$ and the number of edges of $G$ by $m$. We assume that all graphs are {\bf \em simple}, i.e., they contain neither loops nor parallel edges. We denote an edge connecting vertices $v$ and $u$ by $(v,u)$.  A {\bf \em cycle} of graph~$G$ is a sequence  $c=(v_0, \ldots, v_{l-1})$ for some $l \geq 3$ of pairwise distinct vertices of~$G$ such that $(v_i,v_{(i+1) \bmod l})\in E$ for every $i\in\{0,1,\ldots,l-1\}$. We refer to $l$ as the {\bf \em length} of $c$. For a given cycle $c=(v_0, \ldots, v_{l-1})$ any edge of $G$, which  connects two vertices  of $c$ and does not occur in $c$ is called a {\bf \em diagonal} (of $c$).   For a subgraph $H$ of $G$, we denote the edge set of $H$ by $E(H)$. For an edge set $F\subseteq E$ and $v\in V$, we denote by $\deg_F(v)$ the number of edges of~$F$ incident to $v$. 

An instance of each of the three problems that we consider in the paper consists of an undirected subcubic graph~$G=(V,E)$ and a weight function $w:E\to\mathbb{R}_{\geq 0}$. In the weighted triangle-free $2$-matching problem the goal is to find a maximum weight triangle-free $2$-matching of~$G$. In the weighted square-free (resp. $C_4$-free) $2$-matching problem we additionally assume that the weights on the edges are vertex-induced on each square of $G$, i.e. for any square $s=(v_1, v_2, v_3, v_4)$ there exists a function $r: \{v_1, v_2, v_3, v_4\} \rightarrow R$ such that for any edge $e=(u,v)$ connecting two vertices of $s$ it holds that $w(e)=r(u)+r(v)$. 
The aim in the weighted square-free (resp. $C_4$-free) $2$-matching problem is to compute a maximum weight square-free (corr. $C_4$-free) $2$-matching of~$G$.

We will use the classical notion of a $b$-matching, which is a generalization of a matching. For a vector $b\in\mathbb{N}^V$, an edge set $M\subseteq E$ is said to be a $b$-matching of~$G$ if $\deg_M(v) \leq b(v)$ for every $v\in V$. Notice that a $b$-matching with $b(v)=1$ for every $v\in V$ is a classical matching. A $b$-matching of $G$ of maximum weight can be computed in polynomial time. We refer to Lov\'asz and Plummer~\cite{LovaszPlummer2009} for further background on $b$-matchings. 

We are interested in computing a $b$-matching of a graph $G$ where we are given vectors $l,u\in\mathbb{N}^V$ and a weight function~$w:E\to\mathbb{R}$. For a vertex $v\in V$, $[l(v),u(v)]$ is said to be a {\bf \em capacity interval} of $v$. An edge set~$M\subseteq E$ is said to be an \lbmatching{} if $l(v) \leq \deg_M(v) \leq u(v)$ for every $v\in V$. An \lbmatching{} $M$ is said to be a {\bf \em maximum weight} \lbmatching{} if there is no \lbmatching{}~$M'$ of~$G$ of weight greater than $w(M)$. A maximum weight \lbmatching{} can be computed efficiently.

\begin{theorem}[\cite{Gabow1983}]
There is an algorithm that, given a graph~$G=(V,E)$, a weight function~$w:E\to\mathbb{R}$ and vectors $l,u\in\mathbb{N}^V$, in time $O((\sum_{v\in V}u(v))\min\{|E(G)|\log{|V(G)|},|V(G)|^2\})$, finds a maximum weight \lbmatching{} of $G$. 
\end{theorem}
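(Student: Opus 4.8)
The plan is to reduce the maximum weight \lbmatching{} problem to a single ordinary maximum weight matching computation, arranged so that the capacities $u(v)$ govern only the number of augmenting phases and not the cost of a single phase. To begin, I would eliminate the lower bounds: for each vertex $v$ introduce an independent set $S_v$ of $u(v)-l(v)$ new vertices, join $v$ to every member of $S_v$ by an edge of weight $0$, give each vertex of $S_v$ capacity $1$, and replace the constraint on $v$ by the exact requirement $\deg(v)=u(v)$ in the enlarged graph $G^+$. Any subgraph of $G^+$ satisfying all the exact-degree demands restricts to an edge set $M\subseteq E$ with $l(v)\le\deg_M(v)\le u(v)$ of the same weight, and conversely; in particular the original instance is feasible iff $G^+$ admits such a subgraph. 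Since $w$ may be negative one keeps this as an \emph{exact} $b$-matching problem rather than discarding negative edges, so that forced edges are handled correctly. This adds $O(\sum_v u(v))$ vertices and edges.

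Next I would convert the exact $b$-matching instance into an ordinary matching instance by the classical vertex-splitting construction: replace each vertex $v$ by $b(v)$ copies, represent each edge by a constant-size gadget attached to the appropriate copies, and insert zero-weight internal edges so that choosing a gadget corresponds to selecting the edge, with the weight carried over. Performed literally this creates $\Theta(\sum_v b(v)\deg_G(v))$ edges, which is too many to meet the claimed bound. The substantive step --- this is the ``efficient reduction technique'' of \cite{Gabow1983} --- is instead to keep the $b(v)$ copies of a vertex (and the slack-absorbing sets $S_v$) implicit, representing them as single ``fat'' vertices, and to shrink blossoms together with their multiplicities, so that the matching routine effectively operates on the $|V|$ original vertices and $O(|E|)$ edges.

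Finally I would run Edmonds' weighted matching algorithm on this implicitly represented instance. Its loop performs $O(\sum_v u(v))$ phases, each either augmenting the current matching along an alternating path or making a dual adjustment; and because blossoms carry their multiplicities, a single phase works over the $|V|$ original vertices and costs $O(\min\{|E|\log|V|,|V|^2\})$ --- the first term from a priority-queue implementation, the second from an array-based one. Multiplying the two factors yields the running time in the statement, and undoing the two reductions turns an optimal matching of the auxiliary graph into a maximum weight \lbmatching{} of $G$. The one genuinely delicate point is the middle step: carrying out the $b$-matching-to-matching reduction, and blossom contraction in particular, without paying an extra factor of $\deg_G$ in the edge count or an extra factor in the vertex count; granted that, everything else is a routine application of weighted matching machinery.
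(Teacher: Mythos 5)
The paper does not prove this statement at all: it is imported verbatim as a cited result of Gabow \cite{Gabow1983}, so there is no in-paper argument to compare yours against. Your outline does follow the route that the cited work actually takes --- absorb the lower bounds with $u(v)-l(v)$ zero-weight slack vertices per vertex and an exact degree demand (correctly keeping the exact formulation so that negative weights and forced edges are handled), reduce the degree-constrained problem to ordinary weighted matching, and charge the running time as (number of augmenting phases) $\times$ (cost of one Edmonds search), giving $O\bigl(\bigl(\sum_v u(v)\bigr)\min\{|E|\log|V|,|V|^2\}\bigr)$. The one caveat is that the step you yourself flag as delicate --- performing the $b$-matching-to-matching reduction and blossom maintenance implicitly, so that a single search runs over $|V|$ vertices and $O(|E|)$ edges rather than over the $\Theta(\sum_v b(v))$ explicit copies --- is precisely the technical content of Gabow's paper, and your proposal asserts it rather than establishes it. As a self-contained proof it therefore has a real gap at exactly that point; as a reconstruction of what the citation is doing and why the stated bound has the form it does, it is accurate, and it is certainly not a different or incorrect approach.
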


Given an \lbmatching{} $M$ and an edge $e=(u,v) \in M$, we say that $u$ is {\bf \em matched to} $v$ in $M$. \mwcom{we can replace $E(G)$ and $V(G)$ by $E$ and $G$}

\section{Outline of the Algorithm}
The general scheme of the algorithm for each variant of the restricted $2$-matching problem is the same - we give it below.

\begin{algorithm}[H]
\begin{enumerate}[labelindent=2em,labelwidth=\widthof{\ref{itm:step3}},label=\emph{Step \arabic*.},itemindent=3.7em,leftmargin=0pt]
	\renewcommand{\theenumi}{Step \arabic{enumi}}
	\item Construct an auxiliary graph~$G'=(V',E')$ of size $O(n)$ by replacing some triangles and/or squares of $G$ with gadgets containing 
	half-edges. (Both gadgets and half-edges are defined later.)
	\item Define a weight function~$w':E'\to\mathbb{R}$ and vectors $l,u\in\mathbb{N}^{V'}$ such that $u(v) \leq 2$ for every $v\in V'$.
	\item\label{itm:step3} Compute a maximum weight \lbmatching{} $M'$ of $G'$.
	\item Construct a $2$-matching~$M$ of~$G$ by replacing all half-edges of $M'$ with some edges of~$G$ in such a way that $w(M) \geq w'(M')$.
	\item Remove the remaining triangles and/or squares from $M$ by replacing some of their edges with other ones without decreasing the weight of $M$.
\end{enumerate}
\caption{Computing a maximum weight restricted $2$-matching of a subcubic graph~$G$ given a weight function $w:E\to\mathbb{R}_{\geq 0}$.}
\label{alg:main}
\end{algorithm}

\begin{claim}\label{fact:running-time}
Algorithm~\ref{alg:main} runs in time $O(n^2\log{n})$.
\end{claim}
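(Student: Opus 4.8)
The plan is to bound the cost of each of the five steps of Algorithm~\ref{alg:main} separately and observe that Step~3 dominates. The two structural facts that drive everything are: (i) $G$ is subcubic, so $m=|E|=O(n)$; and (ii) in a subcubic graph every edge lies in at most two triangles and in a bounded number of squares (a triangle or square through an edge $(u,v)$ is determined by a common neighbour or a short path between $u$ and $v$, and each of $u,v$ has at most two further neighbours), so $G$ has $O(n)$ triangles and $O(n)$ squares, all of which can be enumerated in $O(n)$ time by a local search around each vertex or edge. Since each gadget used in Step~1 replaces a single triangle or square and — as will follow from the construction given later — consists of $O(1)$ new vertices and $O(1)$ new half-edges, the auxiliary graph satisfies $|V'|=O(n)$ and $|E'|=O(n)$, and it is built in $O(n)$ time. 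Step~2 inspects each gadget and each original edge once to assign $w'$, $l$ and $u$, so it also runs in $O(n)$ time; crucially, it guarantees $u(v)\le 2$ for every $v\in V'$, whence $\sum_{v\in V'}u(v)\le 2|V'|=O(n)$.

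For Step~3 we apply the algorithm of Theorem~\ref{theorem.1}\ (Gabow) — or simply ``the maximum weight \lbmatching{} algorithm'' — to $G'$ with the vectors $l,u$. Its running time is $O\bigl((\sum_{v\in V'}u(v))\cdot\min\{|E'|\log|V'|,\,|V'|^2\}\bigr)$, which by the bounds just established is $O\bigl(n\cdot\min\{n\log n,\,n^2\}\bigr)=O(n^2\log n)$.

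Step~4 replaces each half-edge of $M'$ by an edge of $G$; there are $O(n)$ half-edges and each replacement is decided locally inside its gadget, so this takes $O(n)$ time, and likewise Step~5 repairs the remaining forbidden triangles/squares of $M$ by a constant-size local exchange on each, of which there are $O(n)$, for another $O(n)$ overall. Summing the contributions, the total running time is $O(n^2\log n)$, dominated by Step~3.

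The only point that genuinely has to be verified — and it hinges on the gadget definitions supplied later — is that each gadget has constant size and that the set of triangles and squares ever replaced or repaired has cardinality $O(n)$; granting that, the bound is immediate from Gabow's theorem together with the inequality $u\le 2$ fixed in Step~2. I expect this bookkeeping, rather than any nontrivial estimate, to be the main (and only mild) obstacle.
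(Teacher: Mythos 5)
Your proof is correct and follows essentially the same route as the paper: all steps other than Step~3 run in linear time, and Step~3 costs $O\bigl((\sum_{v\in V'}u(v))\min\{|E'|\log|V'|,|V'|^2\}\bigr)=O(n\cdot n\log n)$ since $|V'|+|E'|=O(n)$ and $u(v)\le 2$. The extra bookkeeping you supply (enumerating the $O(n)$ triangles and squares of a subcubic graph and noting each gadget has constant size) is exactly the justification the paper leaves implicit in its assertion that $G'$ has size $O(n)$.
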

\begin{proof}
It will be easy to implement all steps of an Algorithm~\ref{alg:main}~except \ref{itm:step3} in linear time. Hence, the running time of our algorithm is equal to the running time of an algorithm for computing a maximum weight \lbmatching{} of~$G'$, i.e., it is equal to $O((\sum_{v\in V'}u(v))\min\{|E'|\log{|V'|},|V'|^2\})$. Recall that \mbox{$|V'|+|E'|=O(n)$} and $u(v) \leq 2$ for every $v\in V'$. Hence, the running time of \ref{itm:step3} is $O(n^2\log{n})$.  \end{proof}

Let us also remark that in the unweighted versions of the problem Algorithm~\ref{alg:main} runs in $O(n^{3/2})$. \mwcom{in ,,time''?}

\tikzset{
	triangle/.pic=
	{
		\begin{scope}[font=\Huge,shift={(0,-10/6)}]
		\coordinate (-center)		at (-0.2,10/6-0.2);
		\node (a)	at (-15/6,0)	[bigVertex,label=above left:$a$]{};
		\node (b) 	at (15/6,0)		[bigVertex,label=above right:$b$]{};
		\node (c) 	at (0,21/6)		[bigVertex,label=above right:$c$]{};
		\draw[e0] (a) -- (b);
		\draw[e0] (a) -- (c);
		\draw[e0] (b) -- (c);
		\draw[e0] (a) -- +(210:1.5);
		\draw[e0] (b) -- +(-30:1.5);
		\draw[e0] (c) -- +(0,1.5);
		\coordinate (ab)	at ($(a)!0.5!(b)$);
		\coordinate (ac)	at ($(a)!0.5!(c)$);
		\coordinate (bc)	at ($(b)!0.5!(c)$); 
		\end{scope}
	},
	triangleGadget/.pic=
	{
		\begin{scope}[font=\Huge,shift={(0,-10/2)}]
		\node (a)	at (-15/2,0)	[bigVertex,label=above left:$a$]{};
		\node (b) 	at (15/2,0)		[bigVertex,label=above right:$b$]{};
		\node (c) 	at (0,21/2)		[bigVertex,label=above right:$c$]{};
		\node (vab) at (-5/2,0)		[vertex,label=below:$v^a_b$]{};
		\node (vac) at (-10/2,7/2)	[vertex,label=above left:$v^a_c$]{};
		\node (vba) at (5/2,0)		[vertex,label=below:$v^b_a$]{};
		\node (vbc) at (10/2,7/2)	[vertex,label=above right:$v^b_c$]{};
		\node (vca) at (-5/2,14/2)	[vertex,label=above left:$v^c_a$]{};
		\node (vcb) at (5/2,14/2)	[vertex,label=above right:$v^c_b$]{};
		\node (ua)	at (-6/2,4.5/2)	[vertex,label=below left:$u_a$]{};
		\node (ub)	at (6/2,4.5/2)	[vertex,label=below right:$u_b$]{};
		\node (uc)	at (0,13/2)		[vertex,label=above:$u_c$]{};
		\node (ut)	at (0,7.33/2)	[vertex,label=below:$u_t$]{};
		\draw[e0] (a) -- +(210:3/2);
		\draw[e0] (b) -- +(-30:3/2);
		\draw[e0] (c) -- +(0,3/2);
		\foreach \i in {a,b,c}
			\draw[e0] (u\i) -- (ut);
		\foreach \i in {b,c}
		{
			\draw[e0] (a) -- (va\i);
			\draw[e0] (va\i) -- (ua);
		}
		\foreach \i in {a,c}
		{
			\draw[e0] (b) -- (vb\i);
			\draw[e0] (vb\i) -- (ub);
		}
		\foreach \i in {a,b}
		{
			\draw[e0] (c) -- (vc\i);
			\draw[e0] (vc\i) -- (uc);
		}
		\draw[e0] (vab) -- (vba);
		\draw[e0] (vac) -- (vca);
		\draw[e0] (vbc) -- (vcb);
		\end{scope}
	},
	triangleGadgetOpt/.pic=
	{
		\pic {triangleGadget};
		\draw[e1] (vab) -- (ua);
		\draw[e1] (vba) -- (ub);
		\draw[e1] (ut) -- (uc);
	},
	triangleGadgetNoeliminator/.pic=
	{
		\pic {triangleGadget};
		\draw[e1] (c) -- (vca);
		\draw[e1] (c) -- (vcb);
		\draw[e1] (ut) -- (uc);
	},
	pics/triangleHalfEdges/.style n args={6}{code={
		\pic {triangle};
		\draw[#1] (a) -- (ab);
		\draw[#2] (ab) -- (b);
		\draw[#3] (a) -- (ac);
		\draw[#4] (ac) -- (c);
		\draw[#5] (b) -- (bc);
		\draw[#6] (bc) -- (c);
	}},
	pics/triangleToOpt/.style n args={9}{code={
		\pic (arrow) [scale=0.5]{transformsTo};
		\pic[left=5.5cm of arrow-leftEnd] {triangleHalfEdges={#1}{#2}{#3}{#4}{#5}{#6}};
		\pic[right=5.5cm of arrow-rightEnd] {triangleHalfEdges={#7}{#7}{#8}{#8}{#9}{#9}};
	}}
}

\section{Triangle-free $2$-matchings in subcubic graphs} \label{sec:simple}

In this section we solve a maximum weight triangle-free $2$-matching problem in subcubic graphs. We assume that each connected component of $G$ is different from $K_4$, i.e., different from a $4$-vertex clique. \mwcom{we can get rid of $K_4$ assumption}

One can observe that, since $G$ is subcubic, any edge $e$ of $G$ belongs to at most two different triangles. \mwcom{we do not need this observation} Also, any triangle of $G$
shares an edge with at most one other triangle or, in other words, any triangle of $G$ is not edge-disjoint with at most one other triangle.

\begin{definition}\label{def:problematic-triangle}
A triangle $t$, which has a common edge with some other triangle $t'$ such that $w(t) \leq w(t')$ is said to be {\bf \em unproblematic}. Otherwise, $t$ is said to be {\bf \em problematic}.
\end{definition}

Unproblematic triangles can be easily got rid of from any $2$-matching $M$ of $G$ by replacing some of its edges with other ones as explained in more detail in the proof of Theorem \ref{thm:single-triangles}.

Observe that any problematic triangle of $G$ is vertex-disjoint with any other problematic triangle of $G$.

We begin with the following simple fact. \mwcom{add proof?}
\begin{claim} \label{fact:incircle}
Let $t=(a,b,c)$ be a triangle of $G$, whose edges have weights $w(a,b), w(b,c), w(c,a)$, respectively. Then, there exist real numbers $r_a,r_b,r_c$ such that 
$w(a,b)=r_a+r_b$, $w(b,c)=r_b+r_c$ and $w(c,a)=r_c+r_a$.
\end{claim}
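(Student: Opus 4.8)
The plan is to view the three defining equations as a linear system in the unknowns $r_a,r_b,r_c$ and solve it explicitly. Writing the system as
\[
r_a+r_b=w(a,b),\qquad r_b+r_c=w(b,c),\qquad r_c+r_a=w(c,a),
\]
I would first add all three equations to get $2(r_a+r_b+r_c)=w(a,b)+w(b,c)+w(c,a)$, hence $r_a+r_b+r_c=\tfrac12\bigl(w(a,b)+w(b,c)+w(c,a)\bigr)$. Subtracting each individual equation from this identity then yields the candidate values
\[
r_a=\tfrac12\bigl(w(a,b)-w(b,c)+w(c,a)\bigr),\quad
r_b=\tfrac12\bigl(w(a,b)+w(b,c)-w(c,a)\bigr),\quad
r_c=\tfrac12\bigl(-w(a,b)+w(b,c)+w(c,a)\bigr).
\]

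The second step is simply to verify that these values satisfy the three original equations, which is a one-line substitution (e.g. $r_a+r_b=w(a,b)$ is immediate). Alternatively, and even more briefly, one can note that the coefficient matrix $\left(\begin{smallmatrix}1&1&0\\0&1&1\\1&0&1\end{smallmatrix}\right)$ has determinant $2\neq0$, so the system is nonsingular and a (unique) real solution exists; no appeal to nonnegativity of the weights or to the triangle inequality is needed, so the statement holds for arbitrary real edge weights.

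There is essentially no obstacle here: the claim is a routine fact of elementary linear algebra, and the only thing worth remarking is the geometric reading already mentioned in the introduction — when $w$ obeys the triangle inequality, the numbers $r_a,r_b,r_c$ are exactly the tangent lengths from $a,b,c$ to the incircle of the triangle, which is why the formulas above are manifestly the standard incircle tangent-length expressions.
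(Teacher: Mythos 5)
Your proof is correct: the explicit formulas $r_a=\tfrac12(w(a,b)-w(b,c)+w(c,a))$, etc., do satisfy all three equations, and the nonsingularity of the coefficient matrix gives existence (and uniqueness) for arbitrary real weights. The paper in fact states this claim without proof, so your routine linear-algebra argument is exactly the intended one, and your closing remark about the tangent lengths to the incircle matches the geometric interpretation the authors allude to in Figure 1.
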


If the weights of edges of $t$ satisfy the triangle inequality, then Claim~\ref{fact:incircle} has a geometric interpretation connected to how an  incircle partitions the edges of a triangle - see Figure \ref{incircle}. \mwcom{"Also, the relationship to geometry in the case when the triangle inequality holds is unclear and needs more detail"}

\begin{figure}
\centering{\includegraphics[scale=0.85]{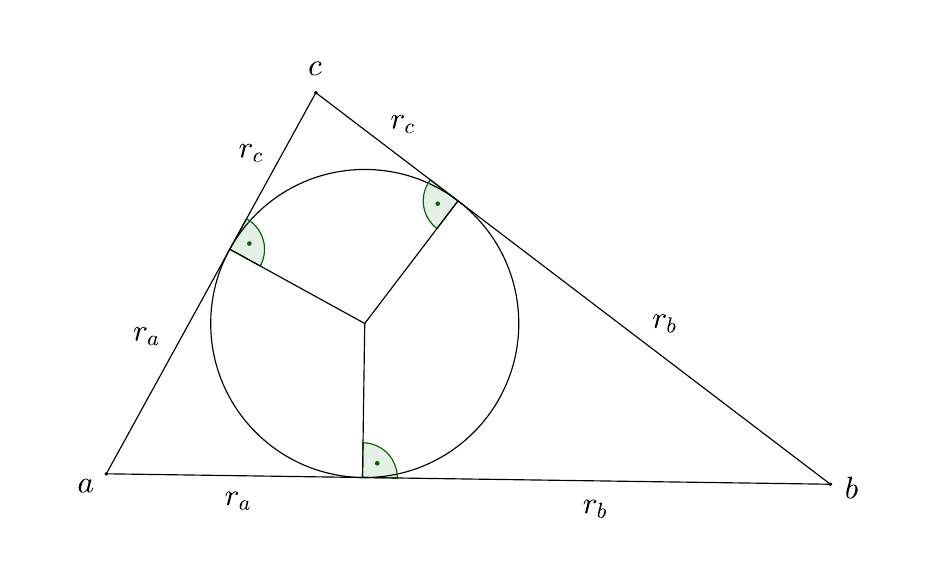}}
\caption{\label{incircle} Partition of the edges of a triangle by its incircle.}
\end{figure}

If $G$ contains at least one problematic triangle, we build a graph $G'=(V', E')$ together with a weight function \mbox{$w':E'\to\mathbb{R}$}, in which each problematic triangle $t$ is replaced with a subgraph, called a {\bf \em gadget for $t$}. The precise construction of $G'$ is the following. We start off with $G$.

Let $t=(a,b,c)$ be any problematic triangle of $G$. For each edge $(p,q)$ of $t$ we add two new vertices $v^p_q$ and $v^q_p$, called {\bf \em subdivision vertices} (of $t$), and we replace $(p,q)$
with three new edges: $(p, v^p_q), (v^p_q, v^q_p),  (v^q_p, q)$. Each of the edges $(p, v^p_q),  (v^q_p, q)$ is called a {\bf \em half-edge}
(of $(p,q)$ and also of $t$). The edge $(v^p_q, v^q_p)$ is called an {\bf \em eliminator} (of $(p,q)$). The half-edges of edges of $t$ get weights equal to values of $r_a, r_b, r_c$ from Claim~\ref{fact:incircle}, i.e., $w'(a, v^a_b)=w'(a, v^a_c)=r_a$, $w'(b, v^b_a)=w'(b,v^b_c)=r_b$ and $w'(c,v^c_a)=w'(c,v^c_b)=r_c$. The weight of each eliminator is $0$. Additionally,
we introduce four new vertices $u_a$, $u_b$, $u_c$, $u_t$, called {\bf \em global vertices}. For every $d\in\{a,b,c\}$ we connect $u_d$ with $u_t$ and with every subdivision vertex connected to $d$. Every edge incident to a global vertex  has weight  $0$.

We define vectors $l,u\in\mathbb{N}^{V'}$ as follows. We set a capacity interval of every vertex of the original graph~$G$ to $[0,2]$ and we set a capacity interval of every other vertex of $G'$ to $[1,1]$, i.e., every vertex of $V'\setminus V$ is matched to exactly one vertex of $G'$ in any \lbmatching{} of $G'$.

The main ideas behind  the gadget for a problematic triangle $t=(a,b,c)$ are the following. An \lbmatching{} $M'$ of $G'$ is to represent roughly a triangle-free $2$-matching $M$ of $G$. If $M'$ contains both half-edges of some edge $e$, then $e$ is included in $M$. If $M'$ contains an eliminator of $e$,
then $e$ does not belong to $M$ (is excluded from $M$).  We want to ensure that at least one edge of $t$ does not belong to $M$. This is done
by requiring that two of the global vertices $u_a, u_b, u_c$ are matched to subdivision vertices. In this way two half-edges of $t$ are guaranteed not to belong to $M'$ and hence to $M$.

\begin{figure}[htpb]
\centering
\begin{tikzpicture}[scale=0.35,transform shape]
	\pic (arrow) [scale=0.5]{transformsTo};
	\pic (t) [left=7cm of arrow-leftEnd] {triangle};
	\begin{scope}[font=\Huge]
	\path (t-center) node{$t$};
	\end{scope}
	
	\pic (g) [right=10cm of arrow-rightEnd] {triangleGadget};
	\begin{scope}[font=\Huge,color=red]
	\path (ga) -- (gvab) node[midway,below] {$r_a$};
	\path (ga) -- (gvac) node[midway,above left] {$r_a$};
	\path (gb) -- (gvba) node[midway,below] {$r_b$};
	\path (gb) -- (gvbc) node[midway,above right] {$r_b$};
	\path (gc) -- (gvca) node[midway,above left] {$r_c$};
	\path (gc) -- (gvcb) node[midway,above right] {$r_c$};
	\end{scope}
\end{tikzpicture}
\caption{A gadget for a problematic triangle $t=(a,b,c)$.}
\end{figure}
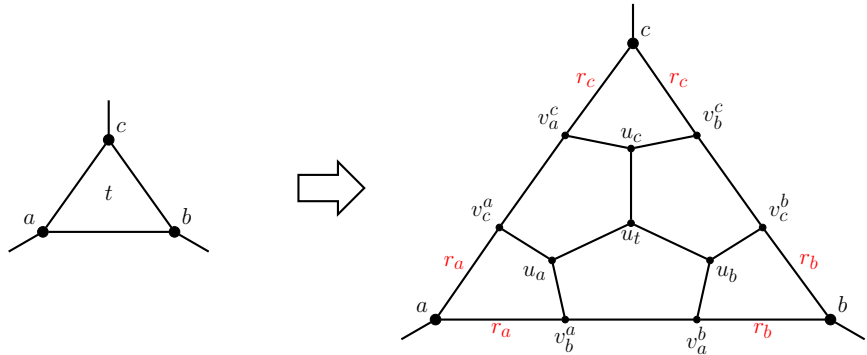

In the theorem below we show the correspondence between triangle-free $2$-matchings of $G$ and {\lbmatching{}}s of $G'$.

\begin{theorem}\label{lem:opt-to-matching-simple}
Let $M$ be any triangle-free $2$-matching of $G$. Then we can find an \lbmatching{} $M'$ of $G'$ such that $w'(M') = w(M)$. 
\end{theorem}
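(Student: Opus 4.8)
The plan is to take a triangle-free $2$-matching $M$ of $G$ and build a corresponding $(l,u)$-matching $M'$ of $G'$ edge by edge, handling the part of $G'$ coming from each problematic triangle separately since problematic triangles are vertex-disjoint. Outside the gadgets, $G$ and $G'$ agree, so I would simply put every edge of $M$ that avoids all problematic triangles directly into $M'$; these contribute exactly their original weight and respect the capacity intervals $[0,2]$ on vertices of $V$. The real work is local: for a fixed problematic triangle $t=(a,b,c)$, I must decide, given how $M$ uses the three edges $(a,b),(b,c),(c,a)$ and the three ``outside'' edges at $a,b,c$, which half-edges, eliminators, and global-vertex edges to select inside the gadget so that (i) every subdivision and global vertex gets degree exactly $1$, (ii) each $a,b,c$ keeps the same total degree it had in $M$, and (iii) the total $w'$-weight picked up equals $w((a,b)\cap M)+w((b,c)\cap M)+w((c,a)\cap M)$.

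The key observation driving the local construction is the weight bookkeeping from Claim~\ref{fact:incircle}: if an edge $(p,q)$ of $t$ is in $M$, I include both its half-edges $(p,v^p_q)$ and $(v^q_p,q)$, contributing $r_p+r_q=w(p,q)$; if $(p,q)\notin M$, I include its eliminator $(v^p_q,v^q_p)$ of weight $0$, which correctly gives no weight and leaves $v^p_q,v^q_p$ with degree $1$. So (iii) is automatic once I mirror membership this way, and the weight statement $w'(M')=w(M)$ follows. What remains is to satisfy the degree constraints at the global vertices $u_a,u_b,u_c,u_t$ and at each $d\in\{a,b,c\}$. Since $M$ is triangle-free, not all three edges of $t$ lie in $M$, so at least one edge, say $(a,b)$, is excluded; then $v^a_b$ and $v^b_a$ are already matched via the eliminator. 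For the subdivision vertices on the two possibly-present edges: if $(a,c)\in M$ then $v^a_c,v^c_a$ are matched by half-edges; if $(a,c)\notin M$ then its eliminator matches them. The leftover subdivision vertices $v^a_b$-side and similar that are NOT matched to $a,b,c$ or via an eliminator must be matched to the appropriate $u_d$, and $u_t$ must be matched to exactly one of $u_a,u_b,u_c$. I would case-split on how many of $t$'s edges are in $M$ (zero, one, or two), and within each case on which outside edges at $a,b,c$ are used, and in each case exhibit the explicit gadget matching; the pictures \texttt{triangleGadgetOpt} and \texttt{triangleGadgetNoeliminator} suggest the two representative configurations to check.

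The main obstacle I anticipate is the degree-counting at $a$, $b$, $c$ themselves. In $G$ a vertex $d$ has $\deg_M(d)\le 2$ split among its two triangle edges and its one outside edge; in $G'$ the outside edge is unchanged, each present triangle edge at $d$ now contributes one half-edge $(d,v^d_\cdot)$ to $d$'s degree, and each absent triangle edge at $d$ contributes a potential edge $(d,v^d_\cdot)$ that I may instead route to $u_d$. So I have a free choice — for each triangle edge $(d,q)\notin M$, either $d$–$v^d_q$ or $u_d$–$v^d_q$ — and I must make these choices (together with the $u_t$–$u_d$ choice) so that each of $u_a,u_b,u_c$ ends with degree exactly $1$ while not exceeding $b(d)=2$ at $a,b,c$. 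The counting that makes this always possible is exactly the ``two of $u_a,u_b,u_c$ are matched to subdivision vertices'' design principle stated before the theorem: one $u_d$ absorbs $u_t$, and the other two each absorb a half-edge's worth of an excluded triangle edge. I would verify that the number of ``slack'' half-edges available (from edges of $t$ not in $M$, of which there is at least one, plus outside-edge slack at $a,b,c$) is always enough to feed all three $u_d$'s; this is a small finite check but it is the crux, and I would present it as the heart of the proof.
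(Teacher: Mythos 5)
Your construction is essentially identical to the paper's: copy edges outside problematic triangles, include both half-edges of each present triangle edge and the eliminator of each absent one, except that for one excluded edge $(a,b)$ (which exists by triangle-freeness) you match $v^a_b$ and $v^b_a$ to $u_a$ and $u_b$ and match $u_t$ to $u_c$ --- exactly the paper's choice, with the weight equality following from Claim~\ref{fact:incircle} as you say. The ``crux'' you defer is already resolved by your own last paragraph (one excluded edge feeds two globals and $u_t$ feeds the third; no outside-edge slack is needed since the globals are only adjacent to subdivision vertices and $u_t$), so the proposal is correct and matches the paper's proof.
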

\begin{proof}
We initialize $M'$ as the empty set. We add every edge of $M$ that does not belong to any problematic triangle of $G$ to $M'$. 
Consider any problematic triangle $t=(a,b,c)$ of $G$.  Since $M$ is triangle-free, there exists an edge of $t$ that does not belong to $M$. If more than one edge of $t$ does not belong to $M$, we choose  one of them. Suppose that we chose $(a,b)\notin M$. Then we add  edges $(v^a_b,u_a)$, $(v^b_a,u_b)$ and $(u_t,u_c)$ to $M'$. For every other edge $e$ of $t$ we proceed as follows. If $e \in M$, we add both  half-edges of $e$ to $M'$, otherwise we add the eliminator of $e$ to $M'$. 
Since the weight of any edge of $t$ in $G$ is equal to the sum of the weights of its half-edges in $G'$, we get that $w'(M') = w(M)$.
\end{proof}

\begin{theorem}\label{thm:single-triangles}
Let $M'$ be any \lbmatching{} of $G'$. Then we can find a triangle-free $2$-matching~$M$ of $G$ such that $w(M) \geq w'(M')$.
\end{theorem}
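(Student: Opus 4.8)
The plan is to build $M$ in the two phases suggested by Steps~4 and~5 of Algorithm~\ref{alg:main}: first decode the gadgets of $M'$ into a $2$-matching $M_0$ of $G$ that contains no \emph{problematic} triangle and satisfies $w(M_0)\ge w'(M')$, and then destroy the remaining (unproblematic) triangles one pair at a time without losing weight.

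\emph{Phase 1.} Put into $M_0$ every edge of $M'$ that is also an edge of $G$ (equivalently, that lies in no problematic triangle); such edges correspond one-to-one to their copies in $G'$, with equal weight. For each problematic triangle $t=(a,b,c)$ we then add a set $F_t\subseteq E(t)$ to $M_0$, chosen so that: (i) $|F_t|\le 2$, so $t$ does not become a triangle of $M_0$; (ii) $w(F_t)\ge w'\bigl(M'\cap E(H_t)\bigr)$, where $H_t$ is the gadget of $t$; and (iii) if $F_t$ contains the two edges of $t$ at some vertex $p$, then the third edge of $G$ at $p$ (if there is one) is not in $M'$. Given such $F_t$'s, $M_0$ is a $2$-matching: for $p\in\{a,b,c\}$ the only edges of $G$ at $p$ are the two edges of $t$ (present in $M_0$ iff in $F_t$) and possibly one more (present iff in $M'$), so by~(iii) $\deg_{M_0}(p)\le 2$; for every other vertex $v$ all edges of $G$ at $v$ are present in $M_0$ iff in $M'$, so $\deg_{M_0}(v)=\deg_{M'}(v)\le 2$. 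By~(i) $M_0$ has no problematic triangle, and summing~(ii) over all problematic triangles together with the weight equality on the regular edges gives $w(M_0)\ge w'(M')$.

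It remains to produce $F_t$, and this is where the capacities enter. Every subdivision and global vertex has capacity $[1,1]$, so $u_t$ is matched exactly once, whence exactly one of $u_a,u_b,u_c$ is matched to $u_t$ and the other two to subdivision vertices; say $u_c$ is matched to $u_t$ (the remaining cases are symmetric). Let $h_p\in\{0,1,2\}$ be the number of half-edges of $H_t$ that $M'$ uses at $p$, so $w'\bigl(M'\cap E(H_t)\bigr)=h_ar_a+h_br_b+h_cr_c$ with $r_a,r_b,r_c$ from Claim~\ref{fact:incircle}. Since $u_a$ occupies one of $v^a_b,v^a_c$ and $u_b$ one of $v^b_a,v^b_c$, each of $a,b$ is matched to at most one subdivision vertex; following the matches forced along the eliminators, one checks that the only possible triples $(h_a,h_b,h_c)$ are $(0,0,0)$, $(1,0,1)$, $(0,1,1)$, $(1,1,2)$, $(0,0,2)$, and that in the last two $c$ is matched to both $v^c_a$ and $v^c_b$, so the third edge of $G$ at $c$ is not in $M'$. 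Choosing $F_t$ to be $\emptyset$, $\{(a,c)\}$, $\{(b,c)\}$, $\{(a,c),(b,c)\}$, $\{(a,c),(b,c)\}$ respectively makes (i) and (iii) clear, makes (ii) an equality in the first four cases, and in the last case gives $w(F_t)-(h_ar_a+h_br_b+h_cr_c)=r_a+r_b=w(a,b)\ge 0$. This case analysis — using the unit capacities and the eliminators to rule out ``wasteful'' configurations and then to pair each surviving one with a suitable $F_t$ — is the part that takes work, and I expect it to be the main obstacle.

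\emph{Phase 2.} Every triangle $t=(x,y,p)$ of $G$ that lies inside $M_0$ is unproblematic, hence shares an edge with another triangle $t'=(x,y,q)$ with $w(t)\le w(t')$. Since $E(t)\subseteq M_0$, the vertices $x,y$ have degree $2$ in $M_0$, so $(x,q),(y,q)\notin M_0$ and $q$ has degree at most $1$. Consider the two moves: delete $(x,p)$ and add $(x,q)$; or delete $(y,p)$ and add $(y,q)$. Either move keeps $M_0$ a $2$-matching — only $\deg(q)$ can grow, from at most $1$ to at most $2$ — destroys the triangle $t$, and creates no new triangle: the added edge lies in no triangle other than $t'$ (here we use that no component of $G$ is $K_4$, which forbids the edge $(p,q)$), and $t'$ does not arise because $(y,q)$, resp.\ $(x,q)$, is absent. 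Together the two moves change $w(M_0)$ by $w(t')-w(t)\ge 0$, so at least one of them does not decrease it; perform that one. Distinct pairs $\{t,t'\}$ are pairwise vertex-disjoint and disjoint from the problematic triangles, so the moves do not interfere; after one move per pair we obtain a triangle-free $2$-matching $M$ with $w(M)\ge w(M_0)\ge w'(M')$.
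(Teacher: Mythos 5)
Your proof is correct and follows essentially the same route as the paper's: Phase~1 is the paper's case analysis of how the gadget of a problematic triangle can intersect $M'$ (your five triples $(h_a,h_b,h_c)$ correspond exactly to the paper's cases of $0$, $2$ or $4$ matched subdivision vertices, with the same decoding rule and the same use of $r_u+r_v\ge 0$), and Phase~2 is the paper's removal of unproblematic triangles by swapping an edge into the heavier neighbouring triangle. You merely make explicit several checks the paper leaves implicit (the $2$-matching degree bounds, that no new triangle is created, and the role of the $K_4$ exclusion).
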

\begin{proof}

We initialize $M$ as the empty set. We add every edge of $M'$ that belongs to $G$ to $M$. For every problematic triangle of $G$ we will add some of its edges to $M$.

Consider any problematic triangle $t=(a,b,c)$ of $G$. Notice that exactly two of the vertices $u_a, u_b, u_c$ are matched to subdivision vertices, because  $u_t$ is matched to one of $u_a, u_b, u_c$.   This corresponds to excluding two half-edges of $t$ from  $M$. Since every subdivision vertex is required to be matched to exactly one vertex in $G'$, 
we get that an even number and at most four subdivision vertices of $t$ are matched to the vertices $a,b,c$. This indicates,  which half-edges of $t$ are going to be included in $M$.  Observe that  the two subdivision vertices that are matched in $M'$ to vertices $u_a, u_b, u_c$ are  adjacent to two different   vertices of $t$. Thus, we have: \mwcom{add proof?}
\begin{claim} \label{4polkr}
If  $M'$ contains exactly four half-edges of $t$, then  the two half-edges of $t$ that do not belong to $M'$ are not adjacent to the same vertex of $t$. \mwcom{maybe define "adjacent" for half-edges? Shouldn't it be "incident" instead of "adjacent"?} 
\end{claim}
Every other subset of half-edges of $t$ containing an even number of at most four half-edges of $t$ can occur in $M'$.

In each of these cases, we proceed as follows (see Figure~\ref{fig:matching-to-opt-simple}):
\begin{enumerate}
\item Exactly zero subdivision vertices of $t$ are matched to $a,b,c$. We do not include any edge of $t$ in $M$.
\item Exactly two subdivision vertices of $t$ are matched to $a,b,c$.
\begin{enumerate}
\item The two subdivision vertices of $t$ are matched to two different vertices $u,v$ of $t$. Then we include the edge $(u,v)$ in $M$.
\item The two subdivision vertices of $t$ are matched to the same  vertex $u$ of  $t$. Then we include in $M$ two  edges of $t$ incident to $u$. (This is the only case where $w(M)$ may be greater than $w'(M')$ when it comes to half-edges of $t$. Notice that for any two vertices $u,v$ of $t$ we have that $r_u+r_v \geq 0$).
\end{enumerate}
\item Exactly four subdivision vertices of $t$ are matched to $a,b,c$. Then, by Claim~\ref{4polkr}, two of these vertices are matched to the same vertex $u$ of $t$
and the other two are matched to the remaining two vertices of $t$. In this case we include in $M$ two  edges of $t$ incident to $u$.

\end{enumerate}


Since half-edges incident to the same vertex have the same weight, we get that $w(M)\geq w'(M')$.

The resulting $2$-matching $M$ can contain some unproblematic triangles. We remove them one by one. Let $t=(a,b,c)$ be any such triangle. From Definition~\ref{def:problematic-triangle} there exists  another triangle $t'=(a,b,d)$, which shares an edge with $t$ and such that $w(t') \geq w(t)$. Hence, either $w(a,d)\geq w(a,c)$ or $w(b,d)\geq w(b,c)$. Assume that $w(a,d)\geq w(a,c)$. We replace the edge $(a,c)$ with the edge $(a,d)$ \mwcom{should it be a separate observation at the beginning?} without decreasing the weight of $M$.
\end{proof}

\tikzset{
	e0/.style={line width=0.4pt},
	e1/.style={line width=0.5mm}}
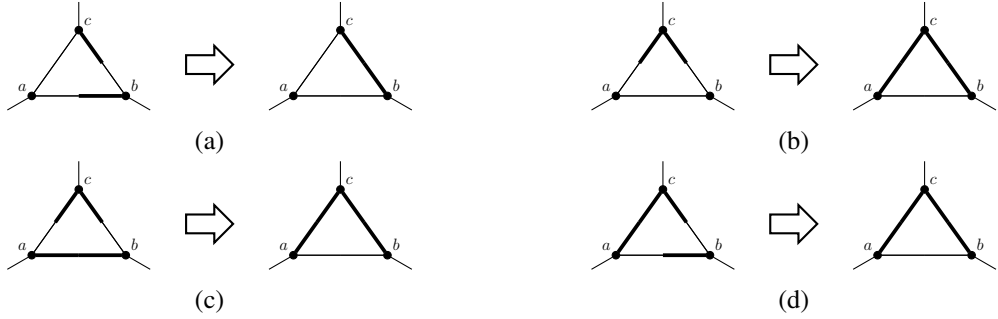
\begin{figure}[htpb]
\centering
\begin{subfigure}{0.49\textwidth}
	\centering
	\begin{tikzpicture}[scale=0.25,transform shape]
	\pic {triangleToOpt={e0}{e1}{e0}{e0}{e0}{e1}{e0}{e0}{e1}};
	\end{tikzpicture}
	\caption{}
\end{subfigure}
\begin{subfigure}{0.49\textwidth}
	\centering
	\begin{tikzpicture}[scale=0.25,transform shape]
	\pic {triangleToOpt={e0}{e0}{e0}{e1}{e0}{e1}{e0}{e1}{e1}};
	\end{tikzpicture}
	\caption{\label{fig:matching-to-opt-simple-f}}
\end{subfigure}
\begin{subfigure}{0.49\textwidth}
	\centering
	\begin{tikzpicture}[scale=0.25,transform shape]
	\pic {triangleToOpt={e1}{e1}{e0}{e1}{e0}{e1}{e0}{e1}{e1}};
	\end{tikzpicture}
	\caption{}
\end{subfigure}
\begin{subfigure}{0.49\textwidth}
	\centering
	\begin{tikzpicture}[scale=0.25,transform shape]
	\pic {triangleToOpt={e0}{e1}{e1}{e1}{e0}{e1}{e0}{e1}{e1}};
	\end{tikzpicture}
	\caption{}
\end{subfigure}
\caption{\label{fig:matching-to-opt-simple} The construction of a maximum weight triangle-free $2$-matching of $G$ from a maximum weight \lbmatching{} of $G'$. }
\end{figure}
\tikzset{
	e0/.style={line width=0.8pt},
	e1/.style={line width=0.8mm}}

\tikzset
{
	vertex/.default=1.5mm,
	bigVertex/.style={vertex=2.4mm},
	singleSquare/.pic=
	{
		\begin{scope}[font=\Large]
		\node (a)	at (-1,1)		[vertex,label=above:$a$]{};
		\node (b)	at (1,1)		[vertex,label=above:$b$]{};
		\node (c)	at (1,-1)		[vertex,label=below:$c^{\phantom{0}}$]{};
		\node (d)	at (-1,-1)		[vertex,label=below:$d$]{};
		\node (aa)	at (-1.5,1.5)	[]{};
		\node (bb)	at (1.5,1.5)	[]{};
		\node (cc)	at (1.5,-1.5)	[]{};
		\node (dd)	at (-1.5,-1.5)	[]{};
		\path (0.15,0) node{$s$};
		\end{scope}
		
		\foreach \x in {a,b,c,d}
			\draw[e0] (\x) -- (\x\x);
		\draw[e0] (a) -- (b) -- (c) -- (d) -- (a);
	},
	singleSquareGadget/.pic=
	{
		\begin{scope}[font=\Large]
		\node (a)	at (-3,3)		[bigVertex,label=above:$a$]{};
		\node (b)	at (3,3)		[bigVertex,label=above:$b$]{};
		\node (c)	at (3,-3)		[bigVertex,label=below:$c^{\phantom{0}}$]{};
		\node (d)	at (-3,-3)		[bigVertex,label=below:$d$]{};
		\node (aa)	at (-4,4)		[]{};
		\node (bb)	at (4,4)		[]{};
		\node (cc)	at (4,-4)		[]{};
		\node (dd)	at (-4,-4)		[]{};
		\node (vab) at (-1,3)		[vertex,label=above:$v^a_b$]{};
		\node (vba) at (1,3)		[vertex,label=above:$v^b_a$]{};
		\node (vbc) at (3.05,1)		[vertex,label=right:$v^b_c$]{};
		\node (vcb) at (3.05,-1)	[vertex,label=right:$v^c_b$]{};
		\node (vcd) at (1,-3)		[vertex,label=below:$v^c_d$]{};
		\node (vdc) at (-1,-3)		[vertex,label=below:$v^d_c$]{};
		\node (vda) at (-2.95,-1)	[vertex,label=left:$v^d_a$]{};
		\node (vad) at (-2.95,1)	[vertex,label=left:$v^a_d$]{};
		\node (us1)	at (-1,1)		[vertex,label=above left:$u_s^1$]{};
		\node (us2)	at (1,1)		[vertex,label=above right:$u_s^2$]{};
		\end{scope}
		
		\foreach \x in {a,b,c,d}
			\draw[e0] (\x) -- (\x\x);
		\draw[e0] (a) -- (vab) node[midway,red,above] {$r_a$};
		\draw[e0] (a) -- (vad) node[midway,red,left] {$r_a$};
		\draw[e0] (b) -- (vbc) node[midway,red,right] {$r_b$};
		\draw[e0] (b) -- (vba) node[midway,red,above] {$r_b$};
		\draw[e0] (c) -- (vcd) node[midway,red,below] {$r_c$};
		\draw[e0] (c) -- (vcb) node[midway,red,right] {$r_c$};
		\draw[e0] (d) -- (vda) node[midway,red,left] {$r_d$};
		\draw[e0] (d) -- (vdc) node[midway,red,below] {$r_d$};
		\draw[e0] (vab) -- (vba);
		\draw[e0] (vbc) -- (vcb);
		\draw[e0] (vcd) -- (vdc);
		\draw[e0] (vda) -- (vad);
		\foreach \x in {b,d}
			\draw[e0] (va\x) -- (us1) -- (vc\x);
		\foreach \x in {a,c}
			\draw[e0] (vb\x) -- (us2) -- (vd\x);
	},
	pics/tripleSquare/.style n args={6}{code={
		\node (a) at (-1,0) [vertex,label=below:$a$]{};
		\node (b) at (0,0) [vertex,label={[label distance=-3]below right:$b$}]{};
		\node (c) at (1,0) [vertex,label=below:$c$]{};
		\node (d) at (0,1) [vertex,label=above:$d$]{};
		\node (e) at (0,-1) [vertex,label=below:$e$]{};
		\node (aa) at (-1.5,0) []{};
		\node (bb) at (0.5,0) []{};
		\node (cc) at (1.5,0) []{};
		
		\foreach \x in {a,b,c}
			\draw[e0] (\x) -- (\x\x);
		\draw[#1] (a) -- (d);
		\draw[#2] (a) -- (e);
		\draw[#3] (b) -- (d);
		\draw[#4] (b) -- (e);
		\draw[#5] (c) -- (d);
		\draw[#6] (c) -- (e);
	}},
	tripleSquareGadget/.pic=
	{
		\begin{scope}[font=\large,vertex/.default=1mm,bigVertex/.style={vertex=1.8mm}]
		\node (us)	at (0.025,0)	[vertex,label=below:$u_S$,label={[font=\footnotesize,gray]above right:$[0,1]$}]{};
		\node (a)	at (210:2)		[bigVertex,label=below:$a$,label={[font=\footnotesize,gray,label distance=5]right:$[0,1]$}]{};
		\node (b)	at (330:2)		[bigVertex,label=below:$b$,label={[font=\footnotesize,gray,label distance=5]left=1:$[0,1]$}]{};
		\node (c)	at (0,2)		[bigVertex,label=-5:$c$,label={[font=\footnotesize,gray]above right:$[0,1]$}]{};
		\node (aa)	at (210:2.75)	[]{};
		\node (bb)	at (330:2.75)	[]{};
		\node (cc)	at (-0.04,2.75)	[]{};
		
		\node (us1) at (4.5,1) [vertex,label=left:$u_S^1$,label={[font=\footnotesize,gray]above right:$[1,1]$}]{};
		\node (us2) at (4.5,0) [vertex,label=left:$u_S^2$,label={[font=\footnotesize,gray]below right:$[1,1]$}]{};
		\end{scope}
		
		\begin{scope}[font=\scriptsize]
		\foreach \x in {a,b,c}
			\draw[e0] (\x) -- (\x\x);
		\draw[e0] (a) -- (us) node[red,midway,above left] {$r_a+r_e$};
		\draw[e0] (b) -- (us) node[red,midway,above right] {$r_b+r_e$};
		\draw[e0] (c) -- (us) node[red,midway,left] {$r_c+r_e$};
		
		\draw[e0] (us1) -- (us2) node[red,midway,right] {$R$};
		\end{scope}
	}
}

\section{Square-free $2$-matchings in subcubic graphs} \label{sec:squares}
In this section we solve a maximum weight square-free $2$-matching problem in subcubic graphs. Recall that this problem is NP-hard for general weights, therefore we assume that weights are vertex-induced on every square, i.e., for any square $s=(a,b,c,d)$ of $G$ there exist real numbers $r_a, r_b, r_c, r_d$, called {\bf \em potentials} of $s$ such that for any edge $e=(u,v)$ connecting two vertices of $s$ \mwcom{stress that it holds for native edges only, not diagonals} it holds that $w(e)=r_u+r_v$. (Note that if a given edge $e=(u,v)$ belongs to two different squares $s$ and $s'$, then potentials of $s$ and $s'$ on $u$ and $v$ may be different.)

We also assume  that each connected component of $G$ is different from $K_4$.

For a square $s=(v_0,v_1,v_2,v_3)$ of $G$, edges $(v_0,v_1)$, $(v_1,v_2)$, $(v_2,v_3)$ and $(v_3,v_0)$ are said to be {\em \bf native} edges of $s$. \mwcom{maybe move definition of native edges to preliminaries?} 

One can observe that, since $G$ is subcubic, any two different squares of $G$ are vertex-disjoint or have either one or two edges in common.

\begin{definition}\label{def:problematic-square}
A square $s$ of $G$ is said to be {\em \bf unproblematic} if there exists another square $s'$ such that (i) $s$ shares exactly one edge with $s'$ or (ii) $s$ shares two edges with $s'$ and $w(s) \leq w(s')$. Otherwise, $s$ is said to be {\em \bf problematic}.
\end{definition}

Observe that any problematic square of $G$ is vertex-disjoint with any other problematic square of $G$.

The following simple observation shows that squares which have exactly one common edge with another square do not pose any problem for computing a maximum weight square-free $2$-matching of $G$.

\begin{claim}\label{obs:one-common-edge-squares}
Consider any two squares $s=(a,b,c,d)$ and $s'=(c,d,e,f)$ of $G$ which share exactly one  edge. Let $M_1$ be a $2$-matching of $G$ that contains $s$. Then there exists a $2$-matching $M_2$ of $G$, which does not contain $s$ or any square not already contained in $M_1$ and such that  $w(M_2) \geq w(M_1)$.
\end{claim}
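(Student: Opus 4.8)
The plan is to leave $M_1$ essentially unchanged and to perform a single local ``rotation'' around the shared edge of $s$. Since $s=(a,b,c,d)$ and $s'=(c,d,e,f)$ share exactly one edge, that edge is $(c,d)$. First I would collect two easy structural facts. Because $E(s)\subseteq M_1$ and $M_1$ is a $2$-matching, each of $a,b,c,d$ is incident in $M_1$ to precisely its two edges of $s$ and to no other edge; in particular the native edges $(c,f)$ and $(d,e)$ of $s'$ do not belong to $M_1$, and in $M_1$ the vertex $a$ has exactly the edges $(a,b),(a,d)$ while $b$ has exactly $(a,b),(b,c)$. Second, I would observe that $a,b,c,d,e,f$ are pairwise distinct: the vertices of $s$ are distinct, the vertices of $s'$ are distinct, and if one of $a,b$ coincided with one of $e,f$ then either $s$ and $s'$ would share a second edge (e.g.\ $b=f$ forces $(b,c)=(f,c)$) or one of $a,b,c,d$ would acquire a fourth neighbour, contradicting subcubicity.

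Now define
\[
M_2 := \bigl(M_1\cup\{(c,f),(d,e)\}\bigr)\setminus\{(c,d),(e,f)\}.
\]
I would check first that $M_2$ is a $2$-matching: the rotation changes only edges incident to $c,d,e,f$; at $c$ (resp.\ $d$) it exchanges the $s$-edge $(c,d)$ for the $s'$-edge $(c,f)$ (resp.\ $(d,e)$), so the degree stays $2$; at $e$ and $f$, if $(e,f)\in M_1$ it is exchanged for $(d,e)$ and $(c,f)$ respectively, and if $(e,f)\notin M_1$ then $e$ and $f$ each had degree at most $1$ in $M_1$, so adding a single edge keeps them within capacity. For the weight, I would use that $w$ is vertex-induced on $s'$: writing $r'_c,r'_d,r'_e,r'_f$ for the potentials of $s'$ we get $w(c,f)+w(d,e)=(r'_c+r'_f)+(r'_d+r'_e)=(r'_c+r'_d)+(r'_e+r'_f)=w(c,d)+w(e,f)$. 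Hence if $(e,f)\in M_1$ then $w(M_2)=w(M_1)-w(c,d)-w(e,f)+w(c,f)+w(d,e)=w(M_1)$, and if $(e,f)\notin M_1$ then $w(M_2)=w(M_1)-w(c,d)+w(c,f)+w(d,e)=w(M_1)+w(e,f)\ge w(M_1)$.

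It remains to show that $M_2$ contains neither $s$ nor a square not already present in $M_1$. Since $(c,d)\notin M_2$, clearly $E(s)\not\subseteq M_2$. Suppose a square $s''$ with $E(s'')\subseteq M_2$ is not contained in $M_1$; then $s''$ uses an edge of $M_2\setminus M_1\subseteq\{(c,f),(d,e)\}$, and by symmetry I may assume $(c,f)\in E(s'')$, so $s''$ runs through $c$. In $M_2$ the vertex $c$ has exactly the two edges $(b,c)$ and $(c,f)$, hence $s''$ uses both, i.e.\ $s''=(b,c,f,g)$ with $(f,g),(g,b)\in M_2$. The vertex $b$ has exactly the edges $(a,b),(b,c)$ in $M_2$ (the rotation does not touch $b$), so $g=a$, and then $(a,f)\in M_2$; but $a$ has exactly the edges $(a,b),(a,d)$ in $M_2$ and $f\notin\{b,d\}$, a contradiction. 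Thus every square contained in $M_2$ is already contained in $M_1$, so $M_2$ is as required.

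I expect the last paragraph to be the main obstacle: it is the only point where the exact local structure of $M_1$ near the two squares and the subcubicity of $G$ are genuinely used, whereas the $2$-matching property and the weight inequality are routine once the rotation is written down.
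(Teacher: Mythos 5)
Your proof is correct and uses exactly the same local rotation as the paper, namely $M_2=M_1\setminus\{(c,d),(e,f)\}\cup\{(c,f),(d,e)\}$; you simply spell out the degree, weight, and no-new-square checks that the paper declares straightforward. If anything, your direct case analysis on whether $(e,f)\in M_1$ is cleaner than the paper's ``we can assume $(e,f)\in M_1$'' step, which the authors themselves flag as delicate since adding $(e,f)$ could create a square.
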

\begin{proof}
We set $M_2=M_1\setminus\{(c,d),(e,f)\}\cup\{(c,f),(d,e)\}.$ Note that we can assume that $M_1$ contains the edge $(e,f)$, \mwcom{note that adding $(e,f)$ to $M_1$ can introduce some square} because $G$ is subcubic and $M_1$ contains neither $(c,f)$ nor $(d,e)$.
It is straightforward to check that $M_2$ is a $2$-matching of $G$ that does not contain $s$. Furthermore, the given construction does not introduce any additional squares into $M_2$. Observe that $w(M_2) \geq w(M_1)$, since $w$ is vertex-induced on $s'$. 
\end{proof}

We show the construction of a gadget for a problematic square $s=(a,b,c,d)$.  We use the notation introduced in Section \ref{sec:simple}. For every native edge $(p,q)$ of $s$, we introduce two subdivision vertices $v^p_q,v^q_p$ and replace  $(p,q)$  with two half-edges $(p,v^p_q)$ and $(v^q_p,q)$ and an eliminator $(v^p_q,v^q_p)$. (We do not replace any diagonal of $s$.) Additionally, we introduce two new global vertices $u_s^1$ and $u_s^2$. We connect $u_s^1$ with all subdivision vertices adjacent to either $a$ or $c$. Symmetrically, we connect $u_s^2$ with all subdivision vertices adjacent to either $b$ or $d$.
 
The half-edges incident to $a$, $b$, $c$ and $d$ get weight $r_a$, $r_b$, $r_c$ and $r_d$, respectively, where $r_a, \ldots, r_d$  are potentials of $s$.  All other edges of the gadget get weight $0$. We set a capacity interval of every vertex of $s$ to $[0,2]$ and we set a capacity interval of every other vertex of the gadget to $[1,1]$.

\begin{figure}[htpb]
\centering
\begin{tikzpicture}[scale=0.75,transform shape]
	\pic (arrow) [scale=0.3]{transformsTo};
	\pic (ss) [left=4cm of arrow-leftEnd] {singleSquare};
	\pic (g) [right=6cm of arrow-rightEnd] {singleSquareGadget};
\end{tikzpicture}
\caption{A gadget for a problematic square $s=(a,b,c,d)$.}
\end{figure}
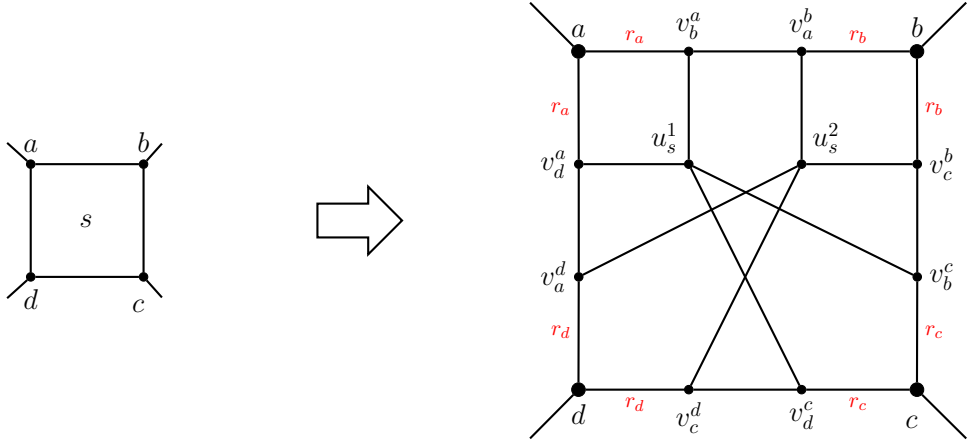

\begin{theorem}\label{lem:square-opt-to-matching}
Let $M$ be any square-free $2$-matching of $G$. Then we can find an \lbmatching{} $M'$ of $G'$ such that $w'(M') = w(M)$.
\end{theorem}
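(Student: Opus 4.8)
The plan is to mimic the proof of Theorem~\ref{lem:opt-to-matching-simple} for triangles, adapting it to the square gadget. I would start by initializing $M'$ as the empty set and adding to $M'$ every edge of $M$ that does not belong to any problematic square of $G$. Then I would process each problematic square $s=(a,b,c,d)$ separately; since problematic squares are vertex-disjoint, these local modifications do not interfere with one another, and the diagonals of $s$ (which were not subdivided) can simply be carried over from $M$ to $M'$ directly whenever they lie in $M$.

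The core of the argument is the local analysis inside a single gadget for a problematic square $s=(a,b,c,d)$. Since $M$ is square-free, it does not contain all four native edges of $s$, so at least one native edge is missing from $M$. The key combinatorial point is that among the four native edges $(a,b),(b,c),(c,d),(d,a)$, the two ``parity classes'' are $\{a,c\}$ and $\{b,d\}$: the global vertex $u_s^1$ must be matched to exactly one subdivision vertex adjacent to $a$ or $c$, and $u_s^2$ to exactly one adjacent to $b$ or $d$. So I would argue that since some native edge is absent, one of the following holds: a native edge incident to $a$ and a native edge incident to $b$ are both absent; or incident to $a$ and $d$; or $c$ and $b$; or $c$ and $d$; one can always pick an absent native edge $e_1$ covering a vertex of $\{a,c\}$ and an absent native edge $e_2$ covering a vertex of $\{b,d\}$ (possibly $e_1=e_2$ when a single absent edge already covers one vertex from each class, e.g. $(a,b)$). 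For the chosen $e_1=(p,q)$ with $p\in\{a,c\}$ I match $u_s^1$ to $v_q^p$, and for the chosen $e_2=(p',q')$ with $p'\in\{b,d\}$ I match $u_s^2$ to $v_{q'}^{p'}$. This removes one half-edge of $e_1$ and one half-edge of $e_2$ from contention. For every native edge $f$ of $s$ that is in $M$, I add both half-edges of $f$ to $M'$; for every native edge of $s$ not in $M$ and not used as the attachment point of a global vertex, I add its eliminator; and for the partially-used absent edges $e_1,e_2$, the remaining (unmatched) subdivision vertex needs to get its eliminator too — I should check that this is consistent, i.e. that the eliminator's other endpoint is free.

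The weight bookkeeping is the easy part: each native edge of $s$ in $M$ contributes $w(f)=r_p+r_q$, which equals the sum of the weights of its two half-edges in $G'$; all gadget edges incident to global vertices and all eliminators have weight $0$; diagonals of $s$ keep their original weight. Hence $w'(M')=w(M)$ exactly. The main obstacle I anticipate is the case analysis establishing that one can always select the absent native edges $e_1,e_2$ so that $u_s^1$ and $u_s^2$ can be legally assigned and the leftover subdivision vertices can each be matched (either paired with a global vertex, or matched via an eliminator), with all $[1,1]$-vertices ending up with degree exactly $1$ and all of $a,b,c,d$ ending up with degree at most $2$ — in particular when only a single native edge of $s$ is absent from $M$ (say $(a,b)$), I must check $u_s^1$ grabs the half-edge at $a$, $u_s^2$ grabs the half-edge at $b$, the other half-edge of $(a,b)$ on each side is then not matched so its eliminator... wait, the eliminator of $(a,b)$ connects $v_b^a$ and $v_a^b$, both of which are now used by global vertices, so instead the leftover subdivision vertices are exactly those and they are already matched — I need to verify the remaining native edges $(b,c),(c,d),(d,a)\in M$ contribute both half-edges and leave $u_s^1,u_s^2$ with degree exactly $1$ and $a,b,c,d$ with degree at most $2$. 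Handling all these sub-cases cleanly (probably with a figure analogous to Figure~\ref{fig:matching-to-opt-simple}) is where the real care is needed.
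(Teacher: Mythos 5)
Your construction is, at its core, the paper's: add to $M'$ every edge of $M$ outside problematic squares, and inside each gadget give both half-edges to every present native edge, the eliminator to every absent native edge except one distinguished absent edge, whose two subdivision vertices are matched to $u_s^1$ and $u_s^2$. However, you leave a loose end that the paper's proof avoids entirely. You allow yourself to pick two \emph{different} absent native edges $e_1$ and $e_2$, one ``covering'' $\{a,c\}$ and one ``covering'' $\{b,d\}$, and you correctly flag that you then have to match the leftover subdivision vertex of each partially-used edge. That branch cannot be completed as you describe: if, say, $e_1=(a,b)$ with $u_s^1$ matched to $v^a_b$ and $e_2=(b,c)$ with $u_s^2$ matched to $v^b_c$, then $v^b_a$ has all of its non-half-edge neighbours ($v^a_b$ and $u_s^2$) already saturated, so it is forced onto the half-edge $(b,v^b_a)$; likewise $v^c_b$ is forced onto $(c,v^c_b)$. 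This adds $r_b+r_c$ to the weight and destroys the claimed equality $w'(M')=w(M)$ (and could even decrease the weight, since potentials may be negative).

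The observation that closes this gap --- and is implicitly the whole content of the paper's one-line case analysis --- is that $\{a,c\}$ and $\{b,d\}$ are the two sides of the bipartition of the $4$-cycle, so \emph{every} native edge $(p,q)$ of $s$ has one endpoint in each class. Hence you may always take $e_1=e_2$ equal to any single native edge absent from $M$ (one exists because $M$ is square-free), match $u_s^1$ to $v^p_q$ and $u_s^2$ to $v^q_p$, and then every remaining subdivision vertex is cleanly saturated either by its two half-edges (present edge) or by its eliminator (absent edge), with all capacity constraints and the exact weight equality following at once. So: same approach as the paper, but you should delete the $e_1\neq e_2$ branch rather than try to repair it.
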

\begin{proof}
We initialize $M'$ as the empty set. We add every edge of $M$ that does not belong to any problematic square of $G$ to $M'$.

Consider any problematic square $s=(a,b,c,d)$ of $G$. Assume that $(a,b)$ does not belong to $M$. We add edges $(v^a_b,u_s^1)$ and $(v^b_a,u_s^2)$ to $M'$. For every other native edge $e$ of $s$ we proceed as follows. If $e\in M$, we add both half-edges of $e$ to $M'$, otherwise we add the eliminator of $e$ to $M'$.

\end{proof}

\begin{theorem}\label{thm:square-matching-to-opt}
Let $M'$ be any \lbmatching{} of $G'$. Then we can find a square-free $2$-matching $M$ of $G$ such that $w(M) \geq w'(M')$.
\end{theorem}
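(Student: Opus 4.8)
The plan is to mirror the structure of the proof of Theorem~\ref{thm:single-triangles}, replacing the triangle analysis with a square analysis. First I would initialize $M$ as the empty set and add to $M$ every edge of $M'$ that is an edge of $G$ (i.e. not a half-edge and not incident to a global vertex). Then, for each problematic square $s=(a,b,c,d)$ of $G$ I would decide which native edges of $s$ to include in $M$, based on how the subdivision vertices of $s$ are matched in $M'$. The key structural observation is the following: since $u_s^1$ has capacity interval $[1,1]$ and is joined only to the subdivision vertices adjacent to $a$ or $c$, exactly one such subdivision vertex is matched to $u_s^1$; likewise exactly one subdivision vertex adjacent to $b$ or $d$ is matched to $u_s^2$. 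Since every remaining subdivision vertex must be matched (capacity $[1,1]$) and its only other neighbours are its partner vertex of $s$ and the endpoint of its eliminator, an even number of subdivision vertices of $s$ — and at most six of the eight — are matched to the vertices $a,b,c,d$; each such pair of half-edges of a native edge $e$ that both lie in $M'$ signals ``include $e$'', each eliminator in $M'$ signals ``exclude $e$''.

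Then I would enumerate the cases by which subdivision vertices are matched to $a,b,c,d$ and, in each, put a corresponding set of native edges into $M$ so that (a) $M$ remains a $2$-matching of $G$, (b) $M$ contains no square on the vertex set $\{a,b,c,d\}$, and (c) the weight contributed to $w(M)$ by native edges of $s$ is at least the weight contributed to $w'(M')$ by half-edges of $s$. The weight bookkeeping is immediate because the two half-edges incident to a vertex $p\in\{a,b,c,d\}$ both have weight $r_p$, so whenever $M'$ puts $k$ half-edges incident to $p$ into the matching ($k\le 2$, but in fact $k\in\{0,2\}$ or $k=1$ with the other half-edge of that native edge carried by the gadget side), the native edges of $s$ incident to $p$ that I place in $M$ can be chosen to contribute at least $k\cdot r_p/2$... more precisely: a native edge $(p,q)$ is put in $M$ exactly when I want to ``charge'' $r_p+r_q$, and I will choose the included native edges so that for each $p$ the number of included native edges incident to $p$ is at least the number of half-edges incident to $p$ that lie in $M'$ divided appropriately, using $r_p+r_q\ge 0$ (which holds since native-edge weights are nonnegative) exactly as in the triangle case when two subdivision vertices are matched to the same vertex of $s$. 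Concretely: if $0$ subdivision vertices of $s$ are matched to $\{a,b,c,d\}$, include no native edge of $s$; if two subdivision vertices are matched, to vertices $p$ and $q$, include $(p,q)$ if $p,q$ are adjacent in $s$, and include the two native edges incident to $p$ if both matched subdivision vertices sit at $p$; if four subdivision vertices are matched, handle the sub-cases (two opposite native edges; an L-shaped pair; a vertex $p$ with both its half-edges plus one more) by including two opposite native edges, the two native edges of the path, or the two native edges at $p$, respectively, never forming a $4$-cycle; if six are matched, one vertex $p$ must receive both of its half-edges while a second vertex also receives two, and I include a path of three native edges avoiding the one eliminator — which cannot close a square. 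In every case the $2$-matching condition at $a,b,c,d$ holds because the original capacities were $[0,2]$ and I add at most two native edges at any vertex.

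After processing all problematic squares, $M$ is a $2$-matching of $G$ with $w(M)\ge w'(M')$ that contains no square spanned by the vertices of a problematic square, but it may still contain \emph{unproblematic} squares. These I remove one at a time exactly as at the end of the proof of Theorem~\ref{thm:single-triangles}, but now invoking the two clauses of Definition~\ref{def:problematic-square} and Claim~\ref{obs:one-common-edge-squares}: if the unproblematic square $s$ shares exactly one edge with another square $s'$, Claim~\ref{obs:one-common-edge-squares} lets me reroute two edges of $s$ without decreasing the weight and without creating a new square; if $s$ shares two edges with some $s'$ with $w(s)\le w(s')$, then one of the two ``outer'' native edges of $s$ can be swapped for the corresponding native edge of $s'$ without decreasing $w(M)$, breaking the cycle $s$. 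One must check that each such local replacement does not recreate a square or violate the $2$-matching property; since $G$ is subcubic and problematic squares are vertex-disjoint, the interactions are limited and this termination argument goes through.

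The main obstacle I expect is the exhaustive case analysis in the middle paragraph: verifying in \emph{every} pattern of matched subdivision vertices (there are several non-isomorphic ones once we condition on the forced matches at $u_s^1,u_s^2$) that a legal choice of native edges exists which simultaneously avoids closing a $4$-cycle on $\{a,b,c,d\}$, keeps degrees $\le 2$, and does not lose weight — together with the bookkeeping that the only place weight can strictly increase is when two half-edges at a common vertex $p$ are ``cashed'' into two native edges at $p$, using $r$-values that may be negative individually but give nonnegative native-edge sums. A secondary subtlety is ensuring the unproblematic-square cleanup in the last paragraph does not cascade, which is why vertex-disjointness of problematic squares and the subcubic degree bound are invoked; a completely careful treatment would note that an unproblematic square sharing two edges with $s'$ is cleaned by looking at $s'$, which may itself be unproblematic, so one fixes an ordering (e.g. by weight) to guarantee termination.
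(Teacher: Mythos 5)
Your skeleton matches the paper's (copy the $G$-edges of $M'$ into $M$, resolve each problematic square by a case analysis on its gadget, then clean up unproblematic squares via Claim~\ref{obs:one-common-edge-squares} and the triangle-style edge swap), and your structural observations about the gadget are correct. But the heart of the proof is the per-square case analysis, and there your argument has a genuine gap: the weight inequality you lean on does not transfer from the triangle case. In a triangle, ``include the two edges at $u$'' costs only $r_v+r_{v'}=w(v,v')\geq 0$ extra because the other two vertices are adjacent; in a square $(a,b,c,d)$ the two neighbours of a vertex are \emph{opposite} corners, so the extra cost $r_b+r_d$ (or $r_a+r_c$) is not the weight of any native edge and can be negative even though all edge weights are nonnegative. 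Concretely, the reachable configuration ``$u_s^1$ matched to $v^c_b$, $u_s^2$ matched to $v^d_a$, both half-edges of $(a,b)$ in $M'$, eliminator of $(c,d)$ in $M'$'' forces $(a,v^a_d),(b,v^b_c)\in M'$ and leaves four subdivision vertices matched to square vertices, two at $a$ and two at $b$, contributing $2r_a+2r_b$ to $w'(M')$. Your prescription ``the two native edges at $p$'' yields $2r_a+r_b+r_d$ or $2r_b+r_a+r_c$, and with e.g.\ $r_a=10$, $r_b=6$, $r_c=2$, $r_d=3$ (all native weights positive) both choices lose weight. The correct move here is to include \emph{three} native edges, the path $(d,a),(a,b),(b,c)$, gaining exactly $w(c,d)\geq 0$ --- a subcase absent from your enumeration (none of your three four-matched patterns covers ``one full native edge plus an orphaned half-edge at each of its endpoints'', and your six-matched case contains no eliminator to avoid, so that description is also off).

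The paper sidesteps the enumeration entirely with a normalization step you are missing. Since $u_s^1$ takes one subdivision vertex adjacent to $a$ or $c$ and $u_s^2$ one adjacent to $b$ or $d$, the two blocked half-edges always sit at the endpoints of a single native edge $(p,q)$. By weight-preserving exchanges inside the gadget (the two half-edges at a given square vertex have equal weight $r_p$, so one may swap which of them is blocked), $M'$ is first modified so that $u_s^1$ and $u_s^2$ block precisely the two half-edges of $(p,q)$; afterwards every other native edge is either fully present or fully eliminated, and the included native edges form a sub-path of the square of exactly the right weight. Only one irreducible subcase survives the normalization (the eliminator of $(p,q)$ itself lies in $M'$), and there the two orphaned half-edges at the far corners are completed to the full edges $(a,d)$ and $(b,c)$, gaining $r_a+r_b=w(a,b)\geq 0$. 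If you want to keep your direct enumeration, you must add the three-edge-path cases and replace every appeal to ``$r_p+r_q\geq 0$ for the two neighbours of a vertex'' by an argument that the path omitting a minimum-weight native edge is at least as heavy as the half-edge contribution; as written, the proof does not go through.
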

\begin{proof}
We initialize $M$ as the empty set. We add every edge of $M'$ that belongs to $G$ to $M$. For every problematic square of $G$ we will add some of its edges to $M$. Next we will replace some edges of $M$ with other ones to remove unproblematic squares.

Consider any problematic square $s=(a,b,c,d)$ of $G$. Notice that there exists a native edge $(p, q)$ of $s$ such that $u_s^1$ and $u_s^2$ are matched in $M'$ to two subdivision vertices, one of which is adjacent to $p$ and the other to $q$. W.l.o.g. assume that $(p,q)=(a,b)$. We consider the following cases:
\begin{enumerate}
\item\label{itm:single-square-case-1} $u_s^1$ and $u_s^2$ are matched in $M'$ to $v^a_b$ and $v^b_a$, respectively. We add every native edge of $s$ whose both half-edges belong to $M'$ to $M$. Notice that for every other native edge $e$ of $s$, the eliminator of $e$ belongs to $M'$.
\item\label{itm:single-square-case-2} Either $u_s^1$ is matched to $v^a_b$ or $u_s^2$ is matched to $v^b_a$ in $M'$, but not both of them. Assume that $u_s^1$ is matched to $v^a_b$. Therefore, edges $(u_s^2,v^b_c)$ and $(b,v^b_a)$ belong to $M'$. We replace these two edges with $(u_s^2,v^b_a)$ and $(b,v^b_c)$ without changing the weight of $M'$. Then we proceed as in  case~\ref{itm:single-square-case-1}.
\item $u_s^1$ and $u_s^2$ are matched to $v^a_d$ and $v^b_c$, respectively, in $M'$. If $(v^a_b,v^b_a)$ does not belong to $M'$, we connect $u_s^1$ and $u_s^2$ with $v^a_b$ and $v^b_a$, respectively, similarly as in  case~\ref{itm:single-square-case-2}, and we proceed as in  case~\ref{itm:single-square-case-1}.  Assume now that $(v^a_b,v^b_a)$ belongs to $M'$. Notice that $(d,v^d_a)$ and $(c,v^c_b)$ belong to $M'$. We add $(a,d)$ and $(b,c)$ to $M$. Additionally, if both half-edges of $(c,d)$ belong to $M'$, we add $(c,d)$ to $M$.
\end{enumerate}


The resulting $2$-matching $M$ can contain some unproblematic squares. We remove squares, which share exactly one edge with  another square from $M$ \mwcom{stress that "from M" applies to "squares" and not to "another square"} one by one using Claim~\ref{obs:one-common-edge-squares}. We remove the rest of unproblematic squares in a similar way as we got rid of   unproblematic triangles in the proof of Theorem~\ref{thm:single-triangles}. Each such removal does not introduce any squares into $M$, therefore $M$ is a square-free $2$-matching in the end.
\end{proof}

\tikzset{
	doubleTriangle/.pic=
	{
		\begin{scope}[font=\Large]
		\node (a)	at (0,1)		[vertex,label=above left:$a$]{};
		\node (b)	at (0,-1) 		[vertex,label=below left:$b$]{};
		\node (c)	at (-1,0)		[vertex,label=left:$c$]{};
		\node (d)	at (1,0) 		[vertex,label=right:$d$]{};
		\node (aa)	at (0.05,1.75)	[]{};
		\node (bb)	at (0.05,-1.75)	[]{};
		\path (0.25,0.4) node{$t_1$};
		\path (0.25,-0.4) node{$t_2$};
		\end{scope}
		
		\foreach \x in {a,b}
			\draw[e0] (\x) -- (\x\x);
		\draw[e0] (c) -- (d);
		\draw[e0] (c) -- (a);
		\draw[e0] (c) -- (b);
		\draw[e0] (d) -- (a);
		\draw[e0] (d) -- (b);
		
	},
	doubleTriangleGadget/.pic=
	{
		\begin{scope}[font=\Large]
		\coordinate (-upCenter)		at (0,0.8);
		\coordinate (-downCenter)	at (0,-0.8);
		\coordinate (-center)		at (0,0);
		\coordinate (-c)			at (-1,0);
		\coordinate (-d)			at (1,0);
		\node (a)	at (0,2)		[vertex,label={right:$a$},label={left:\large $[0,1]$}]{};
		\node (b)	at (0,-2) 		[vertex,label={right:$b$},label={left:\large $[0,1]$}]{};
		\node (aa)	at (-0.05,2.75)	[]{};
		\node (bb)	at (-0.05,-2.75)[]{};
		\node (ut)	at (0,0)		[vertex,label={right:$u_T$},label={left:\large $[0,1]$}]{};
		\node (vt1) at (4,1)		[vertex,label={right:$v_T^1$},label={left:\large $1$}]{};
		\node (vt2) at (4,-1)		[vertex,label={right:$v_T^2$},label={left:\large $1$}]{};
		\end{scope}
		
		\foreach \x in {a,b}
			\draw[e0] (\x) -- (\x\x);
		\begin{scope}[font=\small]
		\draw[e0] (vt1) -- (vt2)	node[midway,right,red] {$M_1^1(T)$};
		\draw[e0] (ut) -- (a)		node[midway,right,red] {$M^2_1(T)-M^1_1(T)$};
		\draw[e0] (ut) -- (b)		node[midway,right,red] {$M^1_2(T)-M^1_1(T)$};
		\end{scope}
	}
}

\section{$C_4$-free $2$-matchings in subcubic graphs}
In this section we solve a maximum weight $C_4$-free $2$-matching problem in subcubic graphs. We assume that weights are vertex-induced on every square. We also assume  that each connected component of $G$ is different from $K_4$.

We say that a cycle $C$ of $G$ is {\bf \em short} if it is either a triangle or a square. We say that a short cycle $C$ of $G$ is {\bf \em unproblematic} if it shares exactly one edge with some square of $G$ or if it fits  Definition~\ref{def:problematic-triangle} or Definition~\ref{def:problematic-square}. A short cycle, which is not unproblematic is said to be {\bf \em problematic}.

We have the analogue of Claim \ref{obs:one-common-edge-squares}, which justifies considering triangles sharing one edge with a square unproblematic:
\begin{claim}\label{obs:one-common-edge-squares1}
Consider  two short cycles: a triangle  $t=(a,c,d)$ and a square $s'=(c,d,e,f)$ of $G$ which share exactly one  edge. Let $M_1$ be a $2$-matching of $G$ that contains $t$. Then there exists a $2$-matching $M_2$ of $G$, which does not contain $t$ or any short cycle not already contained in $M_1$ and such that  $w(M_2) \geq w(M_1)$.
\end{claim}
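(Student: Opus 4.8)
The plan is to mimic the swap used in the proof of Claim~\ref{obs:one-common-edge-squares}, which handled two squares sharing one edge, and adapt it to the case where the short cycle inside $M_1$ is a triangle $t=(a,c,d)$ rather than a square. The shared edge is necessarily the edge $(c,d)$: it is the only native edge of $s'$ among the three edges $(a,c),(c,d),(d,a)$ of $t$, since $a\notin\{e,f\}$ (otherwise $t$ and $s'$ would share more than one edge, or $a$ would coincide with an endpoint of $e$ or $f$, forcing a vertex of $t$ into $s'$ and creating a multi-edge overlap). So $M_1$ contains all of $(a,c)$, $(c,d)$, $(d,a)$, and $(c,d)$ is an edge of the square $s'=(c,d,e,f)$.

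First I would argue, exactly as in Claim~\ref{obs:one-common-edge-squares}, that we may assume $(e,f)\in M_1$: since $G$ is subcubic and $c,d$ already have two incident $M_1$-edges inside $t$, neither $(c,f)$ nor $(d,e)$ lies in $M_1$, so the only edges of $s'$ that could be in $M_1$ are $(c,d)$ and $(e,f)$; if $(e,f)\notin M_1$ we add it — this is harmless for the inequality because $w$ is nonnegative, and I must only check it does not create a new short cycle through $e$ or $f$ that would then have to be accounted for (here one uses subcubicity and the fact that $e,f$ had free degree; any short cycle created is absorbed into the bookkeeping below, or one simply notes $e,f$ had degree at most one in $M_1$). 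Then define
\[
M_2 \;=\; \bigl(M_1\setminus\{(c,d),(e,f)\}\bigr)\cup\{(c,f),(d,e)\}.
\]
Since the potentials on $s'$ give $w(c,d)+w(e,f)=(r_c+r_d)+(r_e+r_f)=(r_c+r_f)+(r_d+r_e)=w(c,f)+w(d,e)$, we get $w(M_2)=w(M_1)$; in particular $w(M_2)\ge w(M_1)$.

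It remains to verify that $M_2$ is a $2$-matching, that $t\notin M_2$, and that $M_2$ contains no short cycle that was not already in $M_1$. Degree bounds: $c$ and $d$ each lose one edge and gain one edge, $e$ and $f$ each lose one and gain one, every other vertex is untouched, so $M_2$ is still a $2$-matching. Since $(c,d)\notin M_2$, the triangle $t$ is destroyed. For the "no new short cycle" claim, the only edges newly added are $(c,f)$ and $(d,e)$, and these are diagonals of $s'$; any short cycle of $M_2$ not present in $M_1$ must use at least one of them, and since $G$ is subcubic one checks (as in the proof of Claim~\ref{obs:one-common-edge-squares}) that the local structure around $c,d,e,f$ after the swap contains no triangle or square through these diagonals other than ones already accounted for. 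The main obstacle, and the only place requiring care, is exactly this last point together with the "we may assume $(e,f)\in M_1$" reduction: adding $(e,f)$ could in principle create a short cycle using $e$ or $f$ and a third vertex outside $\{a,c,d,e,f\}$, so one has to invoke subcubicity to bound the degrees of $e$ and $f$ and argue such a cycle either cannot arise or is itself of the "shares one edge with a square" type and hence handled by iterating the removal — the same iterate-until-done argument already used for unproblematic short cycles in Theorem~\ref{thm:square-matching-to-opt}.
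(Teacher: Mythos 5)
Your proposal is correct and is exactly the argument the paper intends: the paper states this claim without a written proof, presenting it as the analogue of Claim~\ref{obs:one-common-edge-squares}, and your swap $M_2=\bigl(M_1\setminus\{(c,d),(e,f)\}\bigr)\cup\{(c,f),(d,e)\}$ together with the vertex-induced weight identity and the subcubicity-based degree bookkeeping is the direct adaptation of that proof. One terminological slip worth fixing: $(c,f)$ and $(d,e)$ are \emph{native} edges of the square $s'=(c,d,e,f)$ (its cycle is $c$--$d$--$e$--$f$--$c$), not diagonals, though this does not affect the validity of the argument.
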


Observe that any two different short problematic cycles that are not vertex-disjoint must form a pair consisting of a square $s=(a,c,b,d)$ and a triangle $t_1=(a,c,d)$ with exactly two common edges. We call a subgraph induced on vertices of such $s$ and $t_1$ a {\bf \em double triangle} $T=(a,b,c,d)$. In $G'$ we build the following  gadget for every double triangle. \mwcom{I think that we may merge two standard gadgets for triangles instead}

Consider any double triangle~$T=(a,b,c,d)$. We remove $c$ and $d$ from $G'$ and we add a vertex $u_T$ to $G'$. We connect $v_T^1$ with $v_T^2$ \mwcom{shouldn't we introduce $v_T^1$ and $v_T^2$ first?} and we connect $u_T$ with both $a$ and $b$. Let $M^i_j(T)$ denote the weight of a maximum weight $C_4$-free $2$-matching of $T$ in which $a$ has degree $i$ and $b$ has degree $j$. We set the weight of edges $(v_T^1,v_T^2)$, $(u_T,a)$ and $(u_T,b)$ to $M_1^1(T)$, $M^2_1(T)-M^1_1(T)$ and $M^1_2(T)-M^1_1(T)$, respectively. We set capacity intervals of $a$, $b$ and $u_T$ to $[0,1]$. We set capacity intervals of $v_T^1$ and $v_T^2$ to $[1,1]$.

\begin{figure}[htpb]
\centering
\begin{tikzpicture}[scale=0.75,transform shape]
	\pic (arrow) [scale=0.3]{transformsTo};
	\pic (dt) [left=4cm of arrow-leftEnd] {doubleTriangle};
	\pic (g) [right=3cm of arrow-rightEnd] {doubleTriangleGadget};
\end{tikzpicture}
\caption{A gadget for a double triangle $T=(a,b,c,d)$.}
\end{figure}
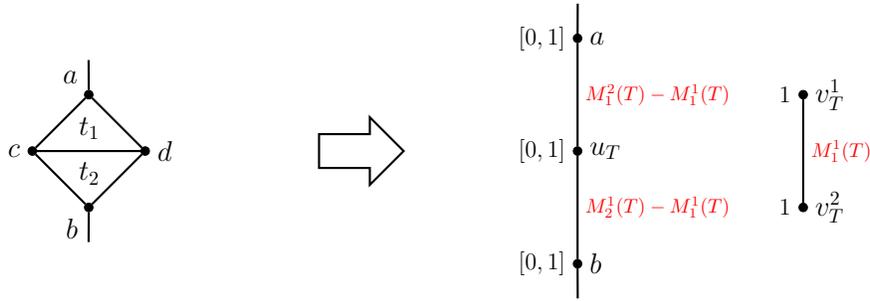

For every problematic short cycle that is not  part of any double triangle we add a corresponding gadget presented in Section~\ref{sec:simple} or Section~\ref{sec:squares}. \mwcom{does it need more explanation?}

\begin{theorem}
Let $M$ be any $C_4$-free $2$-matching of $G$. Then we can find an \lbmatching{} $M'$ of $G'$ such that $w'(M') \geq w(M)$.
\end{theorem}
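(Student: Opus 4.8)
The plan is to mimic the construction in Theorem~\ref{lem:opt-to-matching-simple} and Theorem~\ref{lem:square-opt-to-matching}, treating the three kinds of problematic objects separately: ordinary problematic triangles, ordinary problematic squares, and double triangles. First I would initialize $M'$ with all edges of $M$ that lie in $G'$ as well, i.e.\ every edge of $M$ that is not a native edge of a problematic short cycle and is not incident to a vertex removed inside a double-triangle gadget. For each problematic triangle not belonging to a double triangle, I would reuse verbatim the assignment from the proof of Theorem~\ref{lem:opt-to-matching-simple} (pick an edge of the triangle missing from $M$, match the two corresponding global $u$-vertices to subdivision vertices, route $u_t$ to the third $u$-vertex, and for each remaining edge insert either both half-edges or the eliminator). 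For each problematic square not belonging to a double triangle, I would reuse the assignment from the proof of Theorem~\ref{lem:square-opt-to-matching}. In both cases the weight of the gadget part of $M'$ equals the weight of the corresponding part of $M$, because each native edge's weight splits exactly into the weights $r_\bullet$ of its two half-edges.

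The new ingredient is the double-triangle gadget $T=(a,b,c,d)$, where $c,d$ have been deleted and only $a$, $b$, $u_T$, $v_T^1$, $v_T^2$ survive, with the three weighted edges $(v_T^1,v_T^2)$, $(u_T,a)$, $(u_T,b)$ of weights $M_1^1(T)$, $M_1^2(T)-M_1^1(T)$, $M_1^2(T)-M_1^1(T)$ err — of weights $M_1^1(T)$, $M^2_1(T)-M^1_1(T)$, $M^1_2(T)-M^1_1(T)$. Here $M$ restricted to $T$ is a $C_4$-free $2$-matching of the four-vertex graph $T$ whose only edges touching the rest of $G$ are the pendant edges at $a$ and $b$; so inside $T$ the relevant data is just the pair $(\deg_M(a),\deg_M(b))=(i,j)$ with $i,j\in\{0,1,2\}$, and $w(M\cap E(T))\le M_j^i(T)$ by definition of $M_j^i(T)$ (assuming, as one checks, monotonicity $M_0^i\le M_1^i\le M_2^i$ and symmetry, together with the key superadditivity-type inequality $M_j^i(T)\le M_1^1(T)+(M_1^i(T)-M_1^1(T))+(M_j^1(T)-M_1^1(T))$, equivalently $M_1^1(T)+M_j^i(T)\le M_1^i(T)+M_j^1(T)$). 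Given the pair $(i,j)$, I would put into $M'$: the edge $(v_T^1,v_T^2)$ always (forced, since $v_T^1,v_T^2$ have capacity $[1,1]$); the edge $(u_T,a)$ iff $i=2$; the edge $(u_T,b)$ iff $j=2$. This is a legal $(l,u)$-matching locally: $u_T$ has capacity $[0,1]$ but $(u_T,a)$ and $(u_T,b)$ cannot both be chosen — wait, they can if $i=j=2$, which would violate $u(u_T)=1$; so in that one case I instead pick only $(u_T,a)$ and accept a weight drop, or, better, I would argue $(i,j)=(2,2)$ need not occur in an optimal $M$, or simply bound the lost weight — this is exactly the point the inequality above must be arranged to absorb, and it is the step I expect to be the main obstacle: getting the three gadget weights and capacities so that \emph{every} feasible pair $(i,j)$, including $(2,2)$, maps to a feasible local $M'$ of weight $\ge w(M\cap E(T))$ minus what the pendant half-edges already account for.

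The remaining steps are bookkeeping. After assembling $M'$ I would verify feasibility globally: vertices of $G$ outside gadgets keep degree $\le 2$ automatically since $M$ was a $2$-matching; each subdivision vertex, global vertex, $u_T$, $v_T^1$, $v_T^2$ gets the exact degree its capacity interval demands, by the case analysis above; and the pendant edges at $a,b,c,d$ are handled by noting that in the double-triangle gadget $a,b$ retain capacity $[0,1]$, matching the fact that at most one pendant edge leaves each of them. For the weight comparison $w'(M')\ge w(M)$ I would split $w(M)$ into the contribution of edges outside all gadgets (copied verbatim, equal contribution), the contribution inside ordinary triangle/square gadgets (equal, via the $r_\bullet$ split as in Theorems~\ref{lem:opt-to-matching-simple} and~\ref{lem:square-opt-to-matching}), and the contribution inside double triangles, where $w(M\cap E(T))=M_j^i(T)$-at-best $\le$ (sum of the three gadget-edge weights selected) $= M_1^1(T) + [i{=}2](M_1^2(T)-M_1^1(T)) + [j{=}2](M_1^2(T)-M_1^1(T))$; here I use the defining optimality of $M_j^i(T)$ together with the superadditivity inequality to conclude the gadget side is at least as heavy. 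The hard part, as noted, is establishing that inequality for all $i,j$ and reconciling the $(2,2)$ case with $u(u_T)=1$; everything else is a routine adaptation of the two preceding theorems.
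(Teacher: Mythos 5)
Your overall architecture matches the paper's proof exactly: copy the non-gadget edges, reuse the triangle- and square-gadget assignments from Theorems~\ref{lem:opt-to-matching-simple} and~\ref{lem:square-opt-to-matching}, and for a double triangle $T=(a,b,c,d)$ always take $(v_T^1,v_T^2)$ and take $(u_T,a)$, $(u_T,b)$ according to whether $a$, $b$ have degree $2$ in $M\cap E(T)$. However, the point you flag as ``the main obstacle'' --- the pair $(i,j)=(2,2)$ colliding with $u(u_T)=1$ --- is exactly the one step the paper resolves and you do not, and none of your three proposed workarounds is the right one. The case $(2,2)$ simply cannot occur: the only edges of $T$ incident to $a$ are $(a,c)$ and $(a,d)$, and the only ones incident to $b$ are $(b,c)$ and $(b,d)$, so $\deg_{M\cap E(T)}(a)=\deg_{M\cap E(T)}(b)=2$ would force all four of these edges into $M$, and they form precisely the square $(a,c,b,d)$ --- contradicting the hypothesis that $M$ is $C_4$-free. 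This is a structural consequence of $C_4$-freeness of an \emph{arbitrary} $M$, so your fallback ``argue $(i,j)=(2,2)$ need not occur in an optimal $M$'' is aimed at the wrong target (the theorem quantifies over all $C_4$-free $2$-matchings), and ``accept a weight drop'' would break the inequality $w'(M')\geq w(M)$ you are trying to prove.

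Once $(2,2)$ is excluded, the ``superadditivity'' inequality $M^1_1(T)+M^i_j(T)\leq M^i_1(T)+M^1_j(T)$ that you introduce is never needed: for the realized pair one only uses $w(M\cap E(T))\leq M^i_j(T)$ together with the facts that the gadget contributes exactly $M^2_1(T)$ when $i=2$, exactly $M^1_2(T)$ when $j=2$, and $M^1_1(T)$ otherwise (the latter requiring only the easy monotonicity $M^i_j(T)\leq M^1_1(T)$ for $i,j\leq 1$, which the paper leaves implicit). So the gap is a single missing observation, but it is the one piece of content specific to this theorem rather than inherited from the earlier two.
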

\begin{proof}
We initialize $M'$ as the empty set. We add to $M'$ every edge of $M$ that belongs to no  problematic short cycle. 

Consider any double triangle $T=(a,b,c,d)$ of $G$. We add $(v_T^1,v_T^2)$ to $M'$. Let $\hat{M}=M\cap E(T)$. If $\deg_{\hat{M}}(a)=2$, then we add $(u_T,a)$ to $M'$. If $\deg_{\hat{M}}(b)=2$, then we add $(u_T,b)$ to $M'$. Note that $\deg_{\hat{M}}(a) \leq 1$ or $\deg_{\hat{M}}(b) \leq 1$, therefore $\deg_{M'}(u_T)\leq 1$.

For every problematic short cycle that is not  part of any double triangle \mwcom{call these cycles somehow} we add edges of a corresponding gadget to $M'$ in the same way as we did in the proofs of Theorem~\ref{lem:opt-to-matching-simple} and Theorem~\ref{lem:square-opt-to-matching}.
\end{proof}

\begin{theorem}
Let $M'$ be any \lbmatching{} of $G'$. Then we can find a $C_4$-free $2$-matching $M$ of $G$ such that $w(M) \geq w'(M')$.
\end{theorem}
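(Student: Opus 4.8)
The plan is to mirror the structure of the proofs of Theorem~\ref{thm:single-triangles} and Theorem~\ref{thm:square-matching-to-opt}, handling first the double-triangle gadgets, then the ordinary triangle and square gadgets, and finally cleaning up the unproblematic short cycles. First I would initialize $M$ with all edges of $M'$ that belong to $G$ and lie in no problematic short cycle. For every double triangle $T=(a,b,c,d)$, I would inspect how $u_T$ is matched in $M'$: by the capacity intervals, exactly one of $(u_T,a)$, $(u_T,b)$ or ``nothing'' holds. If $(u_T,a)\in M'$, I want to install inside $T$ a $C_4$-free $2$-matching of weight $M_1^2(T)$ with $\deg(a)=2$, $\deg(b)=1$; if $(u_T,b)\in M'$, one of weight $M_1^2(T)$... more precisely $M_2^1(T)$ with $\deg(a)=1$, $\deg(b)=2$; if $u_T$ is unmatched, one of weight $M_1^1(T)$. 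In each case, since $(v_T^1,v_T^2)$ always contributes $M_1^1(T)$ to $w'(M')$ and the edge at $u_T$ contributes the appropriate difference, the weight contributed to $M$ equals (or can be made to equal) the weight contributed to $M'$ by the gadget. I also need to check feasibility: $a$ and $b$ have residual degree $\le 1$ toward the rest of $G$ precisely because $(u_T,a)$ or $(u_T,b)$ already uses up one unit of their capacity when $\deg_T=2$; and the chosen internal configuration never introduces a triangle or square because it is by definition $C_4$-free on $T$, and no diagonal or short cycle can escape $T$ since $c,d$ have all their neighbours inside $T$.

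Next, for every problematic short cycle that is not part of a double triangle, I would apply verbatim the edge-replacement rules from the proofs of Theorem~\ref{thm:single-triangles} (for triangle gadgets) and Theorem~\ref{thm:square-matching-to-opt} (for square gadgets). Those proofs already show that each such gadget contributes to $M$ at least as much weight as it contributes to $M'$, that the resulting partial $M$ is a $2$-matching of $G$, and that the only short cycles that can appear are unproblematic ones (triangles/squares sharing an edge with another short cycle, or the ``double-edge'' unproblematic ones). The only new point to verify is that the double-triangle cleanup does not interact badly with these: since problematic short cycles outside double triangles are vertex-disjoint from the double triangles and from each other, all these local modifications are independent.

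Finally I would remove the remaining unproblematic short cycles one at a time. Triangles or squares sharing exactly one edge with another square are eliminated via Claim~\ref{obs:one-common-edge-squares} and Claim~\ref{obs:one-common-edge-squares1}; the remaining unproblematic triangles and squares (those sharing two edges with a heavier partner) are eliminated by the swap argument used at the end of the proof of Theorem~\ref{thm:single-triangles} and Theorem~\ref{thm:square-matching-to-opt}, i.e.\ replacing an edge of the light cycle by the corresponding edge of the heavier partner. Each such step does not decrease $w(M)$ and, as argued there, introduces no new short cycle, so after finitely many steps $M$ is a $C_4$-free $2$-matching of $G$ with $w(M)\ge w'(M')$.

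The main obstacle I expect is the bookkeeping around the double-triangle gadget: one must check that the three numbers $M_1^1(T)$, $M_1^2(T)-M_1^1(T)$, $M_2^1(T)-M_1^1(T)$ are nonnegative (or at least that the reduction is still correct if some difference is negative), that an optimal internal configuration of $T$ with prescribed degrees at $a$ and $b$ always exists and is genuinely $C_4$-free (not merely square-free), and that reading off the degree of $u_T$ in $M'$ correctly dictates which of these configurations to install — in particular that the case $\deg_{M'}(u_T)=0$ really forces us to use only $M_1^1(T)$ and that this is compatible with whatever $a$ and $b$ do toward the rest of the graph. Establishing monotonicity of $M_j^i(T)$ in the degree parameters, so that always choosing the ``cheapest consistent'' configuration is safe, is the crux of the argument.
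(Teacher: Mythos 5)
Your proposal follows essentially the same route as the paper: read off from $\deg_{M'}(u_T)$ which of the three internal configurations $M^1_1$, $M^2_1$, $M^1_2$ to install in each double triangle (its weight exactly matches the gadget's contribution $M^1_1(T)$ plus the difference carried by the edge at $u_T$), handle the remaining problematic triangles and squares exactly as in Theorems~\ref{thm:single-triangles} and~\ref{thm:square-matching-to-opt}, and then clean up unproblematic short cycles. The obstacles you flag at the end are not issues for this direction of the reduction: configurations of $T$ with the prescribed degrees $(1,1)$, $(2,1)$, $(1,2)$ always exist and are automatically $C_4$-free (any such edge set in a double triangle is a path or a single edge plus an isolated edge), and monotonicity of $M^i_j(T)$ is only relevant for the converse inequality $w'(M')\ge w(M)$.
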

\begin{proof}
We initialize $M$ as the empty set. We add to $M$ every edge of $M'$ that belongs to $G$. 

Consider any double triangle $T=(a,b,c,d)$ of $G$. Let $i$ and $j$  denote the number of  edges of the~gadget for  $T$ incident to $a$ and $b$, respectively. Notice   that $i +j \leq 1$. We add to $M$ a maximum weight $C_4$-free $2$-matching of $T$ in which $a$ has degree $i+1$ and $b$ has degree $j+1$. 

For every  short cycle that is not  part of any double triangle we proceed in the same way as in the proofs of Theorem~\ref{thm:single-triangles} and Theorem~\ref{thm:square-matching-to-opt}. \end{proof}

\bibliographystyle{abbrv}
\bibliography{bib}

\begin{thebibliography}{10}

\bibitem{ArtamonovBabenko2018}
S.~Artamonov and M.~Babenko.
\newblock A fast scaling algorithm for the weighted triangle-free 2-matching
  problem.
\newblock {\em European Journal of Combinatorics}, 68:3 -- 23, 2018.

\bibitem{BabenkoEtAl2010}
M.~Babenko, A.~Gusakov, and I.~Razenshteyn.
\newblock Triangle-free 2-matchings revisited.
\newblock In {\em Computing and Combinatorics}, pages 120--129, 2010.

\bibitem{BercziKobayashi2012}
K.~B{\'e}rczi and Y.~Kobayashi.
\newblock An algorithm for {$(n-3)$}-connectivity augmentation problem: Jump
  system approach.
\newblock {\em Journal of Combinatorial Theory, Series B}, 102(3):565--587,
  2012.

\bibitem{BercziVegh2010}
K.~B{\'e}rczi and L.~V{\'e}gh.
\newblock Restricted b-matchings in degree-bounded graphs.
\newblock In {\em Integer Programming and Combinatorial Optimization}, pages
  43--56, 2010.

\bibitem{CornuejolsPulleyblank1980}
G.~Cornu\'ejols and W.~Pulleyblank.
\newblock A matching problem with side conditions.
\newblock {\em Discrete Mathematics}, 29(2):135--159, 1980.

\bibitem{CornuejolsPulleyblank1980a}
G.~Cornu{\'e}jols and W.~Pulleyblank.
\newblock Perfect triangle-free 2-matchings.
\newblock In {\em Combinatorial Optimization II}, Mathematical Programming
  Studies, pages 1--7. Springer Berlin Heidelberg, 1980.

\bibitem{Gabow1983}
H.~Gabow.
\newblock An efficient reduction technique for degree-constrained subgraph and
  bidirected network flow problems.
\newblock In {\em Proceedings of the Fifteenth Annual ACM Symposium on Theory
  of Computing}, pages 448--456, 1983.

\bibitem{Geelen1999}
J.~Geelen.
\newblock The {$C_6$}-free {$2$}-factor problem in bipartite graphs is
  {NP}-complete.
\newblock Unpublished, 1999.

\bibitem{Hartvigsen1984}
D.~Hartvigsen.
\newblock {\em Extensions of Matching Theory}.
\newblock PhD thesis, Carnegie-Mellon University, 1984.

\bibitem{Hartvigsen2006}
D.~Hartvigsen.
\newblock Finding maximum square-free {$2$}-matchings in bipartite graphs.
\newblock {\em Journal of Combinatorial Theory, Series B}, 96(5):693--705,
  2006.

\bibitem{HartvigsenLi2011}
D.~Hartvigsen and Y.~Li.
\newblock Maximum cardinality simple 2-matchings in subcubic graphs.
\newblock {\em SIAM Journal on Optimization}, 21(3):1027--1045, 2011.

\bibitem{HartvigsenLi2012}
D.~Hartvigsen and Y.~Li.
\newblock Polyhedron of triangle-free simple 2-matchings in subcubic graphs.
\newblock {\em Mathematical Programming}, 138:43--82, 2013.

\bibitem{Kiraly1999}
Z.~Kir{\'a}ly.
\newblock {$C_4$}-free {$2$}-factors in bipartite graphs.
\newblock Technical report, Egerv{\'a}ry Research Group, 1999.

\bibitem{Kiraly2009}
Z.~Kir{\'a}ly.
\newblock Restricted {$t$}-matchings in bipartite graphs.
\newblock Technical report, Egerv{\'a}ry Research Group, 2009.

\bibitem{Kobayashi2010}
Y.~Kobayashi.
\newblock A simple algorithm for finding a maximum triangle-free {$2$}-matching
  in subcubic graphs.
\newblock {\em Discrete Optimization}, 7:197--202, 2010.

\bibitem{Kobayashi2020}
Y.~Kobayashi.
\newblock Weighted triangle-free {$2$}-matching problem with edge-disjoint
  forbidden triangles.
\newblock In {\em Integer Programming and Combinatorial Optimization}, pages
  280--293, 2020.

\bibitem{LovaszPlummer2009}
L.~Lov{\'a}sz and M.~Plummer.
\newblock {\em Matching theory}.
\newblock AMS Chelsea Publishing, corrected reprint of the 1986 original
  edition, 2009.

\bibitem{Makai2007}
M.~Makai.
\newblock On maximum cost {$K_{t,t}$}-free {$t$}-matchings of bipartite graphs.
\newblock {\em {SIAM} Journal on Discrete Mathematics}, 21:349--360, 2007.

\bibitem{Nam1994}
Y.~Nam.
\newblock {\em Matching Theory: Subgraphs with Degree Constraints and other
  Properties}.
\newblock PhD thesis, University of British Columbia, 1994.

\bibitem{PaluchEtAl2012}
K.~Paluch, K.~Elbassioni, and A.~van Zuylen.
\newblock Simpler approximation of the maximum asymmetric traveling salesman
  problem.
\newblock In {\em 29th International Symposium on Theoretical Aspects of
  Computer Science}, pages 501--506, 2012.

\bibitem{Pap2007}
G.~Pap.
\newblock Combinatorial algorithms for matchings, even factors and square-free
  2-factors.
\newblock {\em Mathematical Programming}, 110:57--69, 2007.

\bibitem{Pap2009}
G.~Pap.
\newblock Weighted restricted 2-matching.
\newblock {\em Mathematical Programming}, 119:305--329, 2009.

\bibitem{Takazawa2009}
K.~Takazawa.
\newblock A weighted {$K_{t,t}$}-free {$t$}-factor algorithm for bipartite
  graphs.
\newblock {\em Mathematics of Operations Research}, 34(2):351--362, 2009.

\bibitem{Takazawa2017}
K.~Takazawa.
\newblock Excluded t-factors in bipartite graphs: A unified framework for
  nonbipartite matchings and restricted 2-matchings.
\newblock In {\em Integer Programming and Combinatorial Optimization}, pages
  430--441, 2017.

\bibitem{Takazawa2017a}
K.~Takazawa.
\newblock Finding a maximum 2-matching excluding prescribed cycles in bipartite
  graphs.
\newblock {\em Discrete Optimization}, 26:26--40, 2017.

\bibitem{Vornberger1980}
O.~Vornberger.
\newblock Easy and hard cycle covers.
\newblock Technical report, Universit\"at Paderborn, 1980.

\end{thebibliography}


\end{document}